\theoremstyle{plain}
\newtheorem{prop}{Proposition}
\numberwithin{equation}{section}
\newcommand{\bcd}{\[\begin{tikzcd}}
\newcommand{\ecd}{\end{tikzcd}\]}
\renewcommand{\P}{\mathbb{P}}
\DeclareMathOperator{\st}{st}
\DeclareMathOperator{\PE}{PE}
\DeclareMathOperator{\DKL}{\text{D}_{\text{KL}}}
\renewcommand{\S}{\mathcal{S}}
\begin{document}
\title[Permutation Patterns In Time Series]{Random Walk Null Models for Time Series Data}
\author{Daryl DeFord \and Katherine Moore}
\address{Department of Mathematics\\
Dartmouth College}
\email{ddeford@math.dartmouth.edu \and
moorek@math.dartmouth.edu}

\date{\today}

\begin{abstract}
Permutation entropy has become a standard tool for time series analysis that exploits  the temporal properties of these data sets. Many current applications use an approach based on Shannon entropy, which implicitly assumes an underlying uniform distribution of patterns.  In this paper, we analyze {\em random walk null models } for time series  and determine the corresponding permutation distributions. These new techniques allow us to explicitly  describe the behavior of real--world data in terms of more complex generative processes. Additionally,  building on recent results of Martinez, we define a validation measure that allows us to determine  when a random walk is an appropriate model for a time series.  We demonstrate the usefulness of our methods using empirical data drawn from a variety of fields. \end{abstract}

\maketitle

\section{Introduction}

In the past fifteen years, measures of entropy defined in terms of the distribution of ordinal patterns have become important tool in the analysis of time series. These methods effectively make use of the temporal structure of this type of data  in ways that are both computationally efficient and simple to implement.  In addition, permutation entropy is invariant under scaling of the data, i.e. under non-linear monotonic transformations, adding to its wide applicability \cite{amigo_combinatorial_2008, bandt_autocorrelation_2014}.  These techniques have found application in in many fields including economics \cite{bariviera_revisiting_2013,hou_characterizing_2017, lim_rapid_2014, ortiz-cruz_efficiency_2012, zunino_efficiency_2012, zunino_forbidden_2009}, medicine \cite{li_multiscale_2010,li_predictability_2007, liu_multiscale_2017, nicolaou_detection_2012,  olofsen_permutation_2008}, and physics \cite{weck_permutation_2015,zunino_permutation_2008}, among others. Three recent surveys \cite{keller_permutation_2017, riedl_practical_2013,zanin_permutation_2012} provide a comprehensive overview of developments in the field and related applications.

The features of permutation entropy, mentioned above, make it particularly well-suited for long time series such as those collected from EEG or ECG machines \cite{li_multiscale_2010, liu_multiscale_2017}.
Extensions of permutation entropy such as creating a spectorgram-like visualization of permutation entropy by considering patterns defined by  $x_{t}, x_{t+d}, x_{t+2d}$ for some various $d \geq 1$ are able to even further highlight subtle changes in behavior, even for periodic data.  This method was used to characterize sleep stages from EEG data and matched the expert annotations almost exactly \cite{bandt_permutation_2016}. Similar scale data sets are becoming increasingly available in the current big data paradigm and permutation methods are well positioned to contribute to comprehensive and meaningful analyses.

Another well-motivated application of permutation entropy appears in the context of economic markets.  According to economic theory, an efficient market is one in which price histories cannot predict future behavior, and thus the market is described by a random walk \cite{fama_behavior_1965,fama_random_1995}.  Thus, the proximity of a particular market to the random walk model serves as a proxy for market efficiency.   Observed market inefficiencies can be caused by communication barriers, unfair competition, momentum, and   calendar year effects including the release or announcement of new product lines, among others.  As a result, quantifying inefficiency over time and comparing relative inefficiency between markets is an important, longstanding question in finance \cite{fama_behavior_1965}. 

To distinguish developed and emerging markets, the authors of \cite{zunino_forbidden_2009} use permutation entropy on the changes in stock prices (returns) to measure the independence of these steps.  Other economics researchers used similar methods to evaluate market volatility directly \cite{lim_rapid_2014}.  The approach presented in this paper is motivated in part by these recent applications of permutation entropy and forbidden patterns to financial time series.  In particular, we show that measures of divergence from null models motivated by economic theory can give useful measures of complexity in this setting.  

We accomplish this by extending the interpretation of permutation entropy as a KL divergence from white noise to measuring deviation from a specified null model.  In this paper, we extend permutation-based methods to include a null model for the distribution of patterns in the setting of random walks, which describes several types of observed data, particularly those from economics.  

\subsection{Forbidden Patterns} An early application of permutation patterns arose in iterated function models of dynamical systems. In this setting, the number of distinct patterns in the time series, $X=\{x,f(x),f(f(x)),\ldots\}$, contains information about the complexity of the system itself. Indeed, Bandt, Keller and Pompe showed \cite{bandt_keller_pompe_2002} that any time series defined by iterating a piece-wise monotone map of the interval has forbidden patterns, that is, patterns that never appear in this way.  Moreover, they showed that the logarithm of the exponential growth rate of the number of patterns is equivalent to the topological entropy, an important measure of complexity from the study of manifolds and discrete-time dynamics.  

Following this result, a measure of complexity of a given time series arising from  discrete-time dynamical systems is defined by counting the number of distinct patterns, usually of some fixed length $3 \leq n \leq 7$ \cite{bandt_permutation_2002}. For example, any time series defined by iterating the full logistic map, $f(x) = 4x(1-x)$ contains at most five patterns of length $3$ ($321$ is forbidden), and at most half of the patterns of length four.  On the other hand, a time series defined the significantly more complex system $f(x) = 10 x \pmod 1$ will eventually contain all patterns of length $n \leq 11$ provided that the initial condition does not fall into a periodic orbit in this time.  

  Although in this paper we are concerned with this dynamically motivated viewpoint, there is now a significant literature surrounding the combinatorial aspects of forbidden patterns.   
Combinatorial descriptions of and enumerations for forbidden patterns in specific iterated functions appear in 
 \cite{amigo_forbidden_2008,archer_allowed_2017,archer_cyclic_2014,
charlier_permutations_2017,elizalde_patterns_2015,elizalde_number_2009}. Additionally, patterns in this context have connections with the study of consecutive pattern avoidance, as outlined in this survey \cite{elizalde_survey_2016}. Several related techinques and approaches can be found in the book \cite{amigo_permutation_2010}. As we will see below, the combinatorial approach of strict pattern avoidance is not applicable to time series generated by random walks, since all patterns occur with non-zero probability (cf. Proposition 2).  

The forbidden pattern metric suggests that time series data that exhibit relatively short forbidden patterns contain a deterministic element.  However, simply counting the forbidden patterns does not describe the entire system, as some patterns may only be missing due to data limitations, such as  relatively short length.  In particular, as shown in Table \ref{tab:length2000}, in a time series of 2000 uniformly random numbers 48 of the $6!=720$ total patterns do not appear.  In the limit, we expect all patterns of length $6$ will appear with the same relative frequency (cf. Proposition 1), and so these 48 patterns are not forbidden in the sense of Bandt-Keller and Pompe; such patterns are often referred to as ``missing patterns" in the literature \cite{amigo_true_2007, zanin_permutation_2012}. Conversely, noise in observational data means that we might observe more patterns than should appear based on the model. Thus, strictly counting the number or proportion of forbidden patterns can be a misleading measure.  

As mentioned in the introduction, applications of these methods have found uses in a variety of fields \cite{keller_permutation_2017, riedl_practical_2013,zanin_permutation_2012}. One example that we will revisit throughout this paper is the appearance of patterns in financial time series as discussed in \cite{zunino_forbidden_2009}. In that paper, the number forbidden patterns is used to detect stock market inefficiencies with the understanding that systems with more forbidden patterns are more deterministic, and thus more inefficient. Similarly, in \cite{lim_rapid_2014} a related method is used to study variability in world economics markets surrounding the 2008 financial crisis. 

Although simple to compute, measures based solely on counting forbidden patterns are heavily influenced by noisy data, as non--existence is a very strict criterion for each pattern. Although thresholding or other preprocessing techniques could be used to improve this method for real data, other related measures have grown in popularity. Thus, most recent applications of permutation methods to time series use a related measure, {\em Permutation Entropy}, which is computed from the distribution of patterns that occur, rather than making measurements defined by the strictly binary forbidden/allowed distinction.

\subsection{Permutation Entropy} 
Currently, the most commonly used metric on pattern distributions in time series is the permutation entropy, originally described in \cite{bandt_permutation_2002}.  For a time series $X=\{X_i\}$ and fixed integer $n$ this measure is defined to be the Shannon entropy for the distribution of ordinal patterns of length $n$ that occur in $X$ and is defined to be \cite{bandt_permutation_2002}
$$\PE_n(X) = - \frac{1}{\log(n!)} \sum_{ \pi \in \S_n} p_{\pi} \log(p_{\pi}),$$
where $p_\pi$ represents the proportion of patterns of length $n$ with shape $\pi$ and the logarithm here, and throughout this paper, is taken base 2. 
The following table gives the permutation entropy and number of forbidden patterns in several different types of data sets for small values of $n$. The data is fully described in Section 1.5 and contains both empirical and simulated time series. Of particular interest is the fact that missing patterns appear in all data sets for $n=6$, even those that are guaranteed (cf. Propositions 1 and 2) asymptotically to contain all patterns. Additionally, notice that the permutation entropy values are quite large for many of the noisy and random data sets. 
\renewcommand{\arraystretch}{1.2}
\begin{table}[!h] 

\begin{tabular}{|c||c|c|c||c|c|c|c|}
\hline
 Data & \multicolumn{3}{|c||}  {Forbidden Patterns}& \multicolumn{3}{c|}  {Permutation Entropy}  \\
& \multicolumn{3}{|c||}   {$n = 4$  \hspace{.12 in} $n = 5$  \hspace{.12 in}  $n = 6$ }&  \multicolumn{3}{c|}   {$n = 4$  \hspace{.25 in} $n = 5$  \hspace{.25 in}  $n = 6$ }  \\ 
 \hline
 RAND & 	\rule{.125in}{0pt}	 0 \rule{.125in}{0pt}&\rule{.125in}{0pt}	 0 \rule{.125in}{0pt}& 48&      			0.998919 & 0.991985 & 0.969542 \\    
 \hline                          
 NORM RW &		 0& 0& 190&     			 0.942041 & 0.915789 & 0.875277 \\   
\hline                           
N-DRIFT RW &	 0& 0& 207&     			 0.932001 & 0.900281 & 0.857146 \\   
\hline                           
UNIF RW &		  0& 0& 216&    			 0.929774 & 0.898946& 0.854548\\    
\hline                           
MEX & 			   0& 0& 129&   			0.965306 &  0.952283 & 0.92578 \\     
\hline                           
NYC & 			 0& 0& 115&     		0.961983 & 0.950457& 0.923901 \\          
\hline                           
SP500 &			 0& 0& 199&     		 0.937607 & 0.906991 & 0.862654 \\        
\hline                           
GE &			 0& 2& 210&     		 0.936839 & 0.905735 & 0.863104 \\       
\hline                           
HEART & 		 0& 8& 344&     			0.847181 & 0.813425 & 0.777208 \\     
\hline                           
SIN & 			 14& 106& 702& 		 0.702098 & 0.540424 & 0.422023 \\        
\hline
\end{tabular}
\caption{Computations of permutation entropy and the number of forbidden patterns for a range of time series of length $N = 2000$ described in Section \ref{sec:data}. Notice that all data sets exhibit forbidden patterns even though in the limit several of the rows 1--4 should have all patterns (Propositions 1 and 2). } 
\label{tab:length2000}
\end{table} 

 When a time series is defined by iterating a piece-wise monotone interval map $f$, the permutation entropy of the time series coincides with the Kolmogorov-Sinai entropy of $f$ \cite{keller_relation_2012, keller_kolmogorovsinai_2010}.  Thus, as the number of forbidden patterns is a permutation analog of the topological entropy of $f$, the permutation entropy is an analog of the Kolmogorov-Sinai entropy of $f$ \cite{bandt_keller_pompe_2002}.  However, most time series data that we encounter is not assumed to be derived from an iterated function, even with a noisy model.

For a time series whose values are 
drawn independently from a given distribution, each pattern of length $n$ asymptotically appears with the same relative frequency, see Proposition \ref{prop:randomness}.  Such a time series is considered to be of maximal entropy and has expected permutation entropy equal to 1 as the number of time steps goes to infinity.

This motivates a recently introduced, alternative interpretation of permutation entropy, as the Kullback-Leibler divergence (KL divergence) of the deviation of the empirical distribution from that of white noise (see \cite{bandt_permutation_2016, keller_permutation_2017} for some exposition about this perspective). The KL divergence for the distribution of patterns in $Z$ from those in $Y$ is defined by:

\begin{equation*}\label{eq:KL} \DKL_n(X||Z) =  \frac{1}{\log(n!)} \sum_{ \pi \in \mathcal{S}_n} P_X(\pi) \log\left(\frac{P_X(\pi)}{P_Z(\pi)}\right). \end{equation*}
The relationship between permutation entropy and the Kullback-Leibler divergence 
of the distribution of patterns in the time series from the uniform distribution, $U$, is 
$$\DKL_n(Z || U) = \frac{1}{\log(n!)}\left(\log(n!) + \sum_{\pi \in \S_n} P_{Z}(\pi) \log(P_Z(\pi))\right) = 1 - \PE_n(Z) .$$

The formulation of permutation entropy in terms of the KL divergence from the expected behavior of white noise motivates our approach in this paper since many types of time series, particularly those arising in financial contexts, exhibit characteristic behavior of their distributions of patterns that is highly non--uniform. Our purpose here is to quantitatively explain this difference and provide null models that more closely approximate the distributions seen in actual data.

\subsection{Notation and Terminology}

For consistency,  we describe the notation that we will use throughout this paper. Given an ordered list of values $x_1,x_2,\ldots, x_n$ with $x_i\neq x_j$ for all $i\neq j$  we define the associated permutation $\st(x_1,x_2,\ldots,x_n)=\pi\in S_n$ such that $x_{\pi^{-1}(1)} < x_{\pi^{-1}(2)}< \ldots < x_{\pi^{-1}(n)} $. This is also called the ordinal pattern of $x_1,x_2,\ldots, x_n$. Given a time series $X = \{x_1, x_2, \ldots, x_N\}$, we represent the ordinal pattern of length $n$, beginning at time $t$, by $\st(X,n,t)$.

In this paper we are concerned with the distribution over patterns rather than the specific time of occurrence of any individual pattern since, as described above, the distribution of patterns in a time series $X$ contains important information about the underlying dynamics. For a fixed time series $X$ and permutation $\pi\in S_n$, we denote the empirical proportion of occurrences of the pattern $\pi$ in $X$ by
 $$p_{\pi} := \frac{| \{ i : \st(X_i, X_{i+1}, \ldots, X_{i + n - 1}) = \pi\}|}{N - n + 1}.$$

Similarly, to a sequence of independent random variables, $\{Z_{i}\}_{i = 1}^n$, we define the expected proportion of occurrences of $\pi\in S_n$ by
$$\P_Z(\pi) = \P(\st(Z_1, Z_2, \ldots, Z_n) = \pi) = \P(Z_{\pi^{-1}(1)} < Z_{\pi^{-1}(2)}< \ldots < Z_{\pi^{-1}(n)}),$$
noting that by independence the starting point does not change the probability. 
Thus, for a long time series, $X_t$, whose values are determined by drawing a value at random according to $Z_t$, we expect $p_{\pi} \approx \P_Z(\pi)$. Additionally, by Proposition 1, we note that if the $\{Z_{i}\}_{i = 1}^n$ are independent and identically distributed continuous random variables, then $\P_Z(\pi) = \frac{1}{n!}$ for all $\pi \in \S_n$.  Thus, the distribution of patterns in white noise (i.e. a randomly generated time series) is approximately uniform and converges to the uniform distribution as the length of the time series goes to infinity.   

We primarily focus on the distribution of patterns in random walks $Z= \{Z_i\}_{i = 1}^\infty$ whose steps $\{Y_i\}_{i = 1}^\infty$ are independent and identically distributed continuous random variables with $Z_{i} = \sum_{j = 1}^{i-1} Y_{j}$.  Since the probabilities $\P_Z(\pi)$, for $\pi \in \S_n$ only involve the first $n$ random variables, it will be enough to consider finite random walks, $\{Z_i\}_{i = 1}^n$.  
If there are no requirements on the distributions of the steps $\{Y_i\}$ we say that this is an \textit{arbitrary random walk} while if the steps, $\{Y_i\}$, are symmetric random variables we will say that $Z$ is a \textit{symmetric random walk}.

In this paper, we focus on the properties of two particular random walk null models based on standard step distributions. 
 When the steps $\{Y_i\}$ are normally distributed, we refer to this as a \textit{random walk with normal steps}, with parameters $
 \mu$ and $\sigma$. 
When the steps $\{Y_i\}$ are uniformly distributed on the interval $[b-1, b]$, with $0<b<1$, we refer to this as a \textit{random walk with uniform steps}.   The parameter specifying the distribution is $\P(Y_i > 0) = b$.   Due to the scale invariance of the permutation measure, it suffices to consider an interval of unit length. Since each of the $Y_i$, are identically distributed, we will sometimes drop the subscript when referring to their distributions.

\subsection{Contributions}   Our purpose in this paper is to describe the distributions of ordinal patterns of random walk null models for time series data in order to derive a  corresponding KL measure generalizing permutation entropy. These models are motivated by the KL divergence definition of permutation entropy described in Section 2.2 and domain specific hypotheses about the random behavior of time series data. In the next section we describe the theoretical properties of these models, including the expected distributions, which allow us to define a KL divergence to the derived values. Next, we describe a metric, based on recent work of Martinez and Elizalde \cite{martinez_frequency_2015}, that measures how well a given distribution matches any random walk model. We conclude by applying the new methods to a wide variety of data sets to demonstrate their advantages and applicability. 

\subsection{The Data}\label{sec:data}
Throughout this paper we use several example data sets to evaluate our methods and compare to traditional approaches. Unless otherwise specified, these time series have $N=2000$ data points.  This data includes synthetic random values as well as empirical data from economics, ecology, and medicine. Below we describe the key features of the data and the abbreviations that we use throughout the paper. Plots of the time series are displayed in Appendix A.
\begin{itemize}
\item (RAND):  A sequence of 2000 uniform random numbers drawn from $[0,1]$. 
\item (NORM RW): A simulated random walk whose steps are drawn at random from the standard normal distribution, $(\mu, \sigma) = (0, 1)$.  
\item (N-DRIFT RW): A simulated random walk whose steps are drawn at random from the normal distribution with $(\mu, \sigma) = (0.701832, 14.945)$; this is the normal curve fitted to the returns in the S\&P500 data below.  
\item (UNIF RW): A simulated random walk whose steps are drawn uniformly at random from the uniform distribution on the interval $[-.5, .5]$.  
\item (U-DRIFT RW): A simulated random walk whose steps are drawn uniformly at random from the uniform distribution on the interval $[-.35, .65]$. 
\item  (SP500): The daily closing values of the S\&P500 from January 24, 2009 to December 31, 2016. Data provided by Morningstar  and accessed through \cite{Mathematica}.
\item (MEX): Average daily temperatures in Mexico City from June 20, 2011 to December 31, 2016. Data provided by the World Meterological Organization   through \cite{Mathematica}.  
\item (NYC): Average daily temperatures in New York City from June 20, 2011 to December 31, 2016.    From the National Oceanic and Atmospheric Administration through \cite{Mathematica}.
\item (HEART): Instantaneous heart rate measurements taken at $.5$ second intervals collected at  MIT \cite{goldberger_nonlinear_1991}. 
\end{itemize}

In all cases, random values are generated using Mathematica's \cite{Mathematica} pseudo-random number generator and all historical market closing values are provided by Morningstar through Mathematica. In the final section, we use the daily closing prices of the S\&P500, Apple (AAPL), Amazon (AMZN), Bank of America (BAC), General Electric (GE), Coca Cola (KO), and United Parcel Service of America (UPS) for  trading days from January 1, 2002 to January 1, 2017 ($N  = 3777$).  Finally, for a longitudinal test, we use daily closing prices of the S\&P500 from January 1, 1958 until January 1, 2017 ($N=14348$).

\section{Distributions of Patterns in Random Walks} 

In this section, we establish some of the important properties of the distribution of patterns for random walk null models.  For the uniform and symmetric normal random walk models, we give the distribution of patterns of length $n = 3, 4$ in Table \ref{tab:normunifprobabilites} and show how these values can be computed for larger $n$ in Proposition 5. 

\subsection{Comparison to Forbidden Patterns and Permutation Entropy}  We begin by showing that any data model whose values are i.i.d. random variables gives rise to the uniform distribution over permutations. 
\begin{prop}\label{prop:randomness} If $Y$ is a sequence of i.i.d. continuous random variables, then for any $\pi \in \mathcal{S}_n$, we have 
$$\P_Y(\pi) = \frac{1}{n!}.$$
\end{prop}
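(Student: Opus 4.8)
The plan is to exploit the fact that, for continuous random variables, ties occur with probability zero, so almost surely the values $Y_1,\dots,Y_n$ determine a well-defined ordinal pattern $\st(Y_1,\dots,Y_n)\in\mathcal S_n$. Thus the events $\{\st(Y_1,\dots,Y_n)=\pi\}$ for $\pi\in\mathcal S_n$ partition the probability space up to a null set, and $\sum_{\pi\in\mathcal S_n}\P_Y(\pi)=1$. It therefore suffices to show that $\P_Y(\pi)$ does not depend on $\pi$; the common value is then forced to be $1/n!$.

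First I would fix $\pi,\tau\in\mathcal S_n$ and exhibit a measure-preserving bijection between the event $\{\st(Y_1,\dots,Y_n)=\pi\}$ and the event $\{\st(Y_1,\dots,Y_n)=\tau\}$. The natural choice is the coordinate permutation: let $\rho=\tau\pi^{-1}$ and consider the map on $\mathbb R^n$ that sends $(y_1,\dots,y_n)$ to $(y_{\rho^{-1}(1)},\dots,y_{\rho^{-1}(n)})$. Because the $Y_i$ are i.i.d., the joint law of $(Y_1,\dots,Y_n)$ is invariant under permuting coordinates, so this map pushes the distribution forward to itself; and by the definition of $\st$ it carries the pattern $\pi$ exactly to the pattern $\tau$. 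Hence $\P_Y(\pi)=\P_Y(\tau)$ for all $\pi,\tau$, and since these $n!$ equal quantities sum to $1$, each equals $1/n!$.

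The only genuine point requiring care — and the one I would state explicitly rather than gloss over — is that the continuity hypothesis is needed precisely to ensure $\P(Y_i=Y_j)=0$ for $i\neq j$, so that the pattern is almost surely defined and the events above really do form a partition. This follows since for continuous (i.e.\ non-atomic) distributions, $\P(Y_i = Y_j) = \int \P(Y_j = y)\,dF_{Y_i}(y) = 0$ by independence. With that in hand the argument is purely a symmetry/counting argument and there is no serious obstacle; the exchangeability of i.i.d.\ variables does all the work. (An alternative phrasing: condition on the unordered multiset $\{Y_1,\dots,Y_n\}$, which a.s.\ consists of $n$ distinct reals; given this, all $n!$ orderings are equally likely by symmetry, so each pattern has conditional probability $1/n!$, and averaging gives the result.)
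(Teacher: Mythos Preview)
Your proof is correct and follows essentially the same approach as the paper: both arguments use exchangeability of the i.i.d.\ sequence to show that $\P_Y(\pi)$ is the same for all $\pi$, and then conclude the common value is $1/n!$. You are in fact more careful than the paper on one point: you explicitly invoke the continuity hypothesis to guarantee $\P(Y_i=Y_j)=0$, so that the events $\{\st(Y_1,\dots,Y_n)=\pi\}$ partition the space up to a null set and the $n!$ equal probabilities sum to $1$; the paper's proof leaves this step implicit.
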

\begin{proof} Let $\pi \in \S_n$; we will show that $\P_Y(\pi) = \P_Y(12 \ldots (n{-}1)n)$.  Since $Y_i$ and $Y_j$ are i.i.d., we have $\P(Y_i < Y_j) = \P(Y_j < Y_i)$, and so transposing any pair of variables does not change the probability of the event.  It follows that 
$$\P_Y(\pi) = \P(Y_{\pi^{-1}(1)} < Y_{\pi^{-1}(2)} < \ldots < Y_{\pi^{-1}(n)}) = 
 \P(Y_1 < Y_2 < \ldots < Y_n )  = \P_Y(12 \ldots (n{-}1)n).$$
Therefore, all permutations of a fixed length occur with the same probability, and so $P_Y(\pi) = \frac{1}{n!}$.  
\end{proof}

In particular, Proposition \ref{prop:randomness} implies that for any  time series $Y = \{Y_i\}_{i = 1}^\infty$ generated by a random process, for each $\pi \in \S_n$, we expect the relative frequency of $\pi$ to approach $\frac{1}{n!}$  as $N \rightarrow \infty$.  
The next results describe how the behavior of random walk models differ from the forbidden pattern and permutation entropy measures by showing that in such a model there are no forbidden patterns (Proposition 2) and that the distribution of patterns is never uniform (Propositions 3 and 4).

\begin{prop} \label{prop:noforbidden} If $Z$ is a normal or uniform random walk such that $\P(Y > 0) \notin \{0, 1\}$ then for any $\pi \in \mathcal{S}_n$, we have
$$\P_Z(\pi) > 0.$$
\end{prop}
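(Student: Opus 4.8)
The plan is to exhibit the event $\{\st(Z_1,\dots,Z_n)=\pi\}$, viewed in the space of step vectors, as a nonempty open set, and then to check that the product step distribution assigns it positive mass. Since $Z_1=0$ and $Z_{i+1}=Z_i+Y_i$, the linear map $T\colon\mathbb{R}^{n-1}\to\mathbb{R}^{n-1}$ given by $T(y_1,\dots,y_{n-1})=(y_1,\ y_1+y_2,\ \dots,\ y_1+\cdots+y_{n-1})=(Z_2,\dots,Z_n)$ is unitriangular, hence invertible. Let $C_\pi\subseteq\mathbb{R}^{n-1}$ be the set of $(w_1,\dots,w_{n-1})$ with $\st(0,w_1,\dots,w_{n-1})=\pi$. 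This set is defined by finitely many strict inequalities among the numbers $0,w_1,\dots,w_{n-1}$, so it is open; it is nonempty, since one may place the coordinates that $\pi$ ranks below the first entry at distinct negative values and those it ranks above at distinct positive values; and it is invariant under scaling by positive reals, which is just the scale invariance of the permutation measure with the first coordinate pinned at $0$. Therefore $S_\pi:=T^{-1}(C_\pi)$, the set of step vectors whose induced walk realizes $\pi$, is a nonempty open cone in $\mathbb{R}^{n-1}$, and $\P_Z(\pi)=\mu^{\otimes(n-1)}(S_\pi)$, where $\mu$ denotes the common law of the steps $Y_i$.

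It remains to see $\mu^{\otimes(n-1)}(S_\pi)>0$. Fix any $\mathbf{y}^{*}\in S_\pi$; since $S_\pi$ is a cone, $\lambda\mathbf{y}^{*}\in S_\pi$ for all $\lambda>0$, and choosing $\lambda$ small makes every coordinate of $\lambda\mathbf{y}^{*}$ as close to $0$ as we like. In both models $0$ lies in the interior of $\operatorname{supp}\mu$: the normal law, with or without drift, has strictly positive density on all of $\mathbb{R}$, and the uniform law on $[b-1,b]$ with $0<b<1$ has $b-1<0<b$ — which is exactly the content of the hypothesis $\P(Y>0)\notin\{0,1\}$ for these models. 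Hence for $\lambda$ small enough $\lambda\mathbf{y}^{*}$ lies in $(\operatorname{int}\operatorname{supp}\mu)^{n-1}$, and since $S_\pi$ is open we can find a product of open intervals $I_1\times\cdots\times I_{n-1}\subseteq S_\pi$ around $\lambda\mathbf{y}^{*}$ with each $I_k\subseteq\operatorname{supp}\mu$. By definition of support $\mu(I_k)>0$, so by independence $\mu^{\otimes(n-1)}(I_1\times\cdots\times I_{n-1})=\prod_{k}\mu(I_k)>0$, whence $\P_Z(\pi)\ge\mu^{\otimes(n-1)}(S_\pi)>0$.

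The genuinely routine parts are the unitriangularity of $T$ and the openness and nonemptiness of $C_\pi$. The only step that needs a little care is the uniform model, where $\operatorname{supp}\mu$ is bounded, so one cannot simply invoke ``positive density everywhere'' as in the normal case; the cone (scaling) argument, powered by the hypothesis that $0$ is interior to the support, is precisely what handles this. I do not anticipate any real obstacle beyond this bookkeeping, and for the normal case the argument even simplifies, since there every nonempty open set automatically has positive measure.
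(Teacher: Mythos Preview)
Your argument is correct. The key ideas --- that the event $\{\st(Z_1,\dots,Z_n)=\pi\}$ corresponds to a nonempty open cone in step-space, and that the hypothesis forces $0$ to lie in the interior of $\operatorname{supp}\mu$ so that one can scale into a region where the product measure is positive --- are sound and cleanly executed. (A trivial slip: your final line should read $\P_Z(\pi)=\mu^{\otimes(n-1)}(S_\pi)\ge\prod_k\mu(I_k)>0$ rather than $\P_Z(\pi)\ge\mu^{\otimes(n-1)}(S_\pi)$.)

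The paper takes a different, more concrete route. Rather than passing to step-space, it works directly in walk-space: it lays down $n$ disjoint subintervals $I_1<I_2<\cdots<I_n$ inside $(0,b)$, each of width $b/n$, and argues that from any $I_j$ the walk can reach any $I_k$ in one step with positive probability, because the required step size lies in $\operatorname{supp}\mu$. Chaining these transitions in the order dictated by $\pi$ gives a positive lower bound on $\P_Z(\pi)$. Conceptually both proofs exploit the same fact --- small steps near zero suffice to realize any pattern --- but the paper verifies this by explicit interval arithmetic separately for the uniform and normal cases, while you abstract it into the single observation that $S_\pi$ is a cone with apex at the origin and $0\in\operatorname{int}\operatorname{supp}\mu$. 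Your version is shorter, avoids the casework, and generalizes immediately to any i.i.d.\ step distribution whose support contains a neighborhood of the origin; the paper's version has the modest advantage of being entirely elementary, with no appeal to topological notions like support or cones.
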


\begin{proof} Let $b=\P(Y > 0)$, consider the case when $b \leq \frac{1}{2}$, the other case follows a similar argument.    
 Define a collection of intervals by $I_j= \left(\frac{(j-1)b}{n}, \frac{jb}{n}\right)$ with midpoint $m_j=\dfrac{(2j-1)b}{2n}$, for $1 \leq j \leq n$.  Notice that $ \{I_j\}_{j=1}^n\subset\left(0,b \right)$. We claim that for all positive integers $i$ and pairs $1\leq j, k\leq n$ we have  $$\P(Z_{i+1}\in I_k | Z_{i}\in I_j)>0.$$ 

Suppose that $j < k$, the other case follows a parallel argument.  Let $Z_{i+1} \in I_j$, and so either a) $Z_{i} \leq m_j$ or b) $Z_{i}> m_j$.  In case a), we have $\P(Z_{i+1}\in I_k |Z_{i}\in I_j)>\P( \frac{(k-j)b}{n} <Y_i< \frac{kb}{n}-m_j )$ and in case b) we have $\P(Z_{i+1}\in I_k | Z_{i}\in I_j)>\P( \frac{kb}{n}-m_j <Y_i< \frac{(k-j+1)b}{n} )$.

In the case that $Y_i$ is normal, it is clear that $\P( \frac{(k-j)b}{n} <Y_i< \frac{kb}{n}-m_j )$ and $\P( \frac{kb}{n}-m_j <Y_i< \frac{(k-j+1)b}{n} )$ have positive probability since every interval has positive probability.  Hence, $\P(Z_{i+1}\in I_k | Z_{i}\in I_j)>0.$

In the case that $Y_i$ is uniform on $[1-b, b]$, we have  $\P( \frac{(k-j) b}{n}<Y_i< \frac{bk}{n}-m_j) > 0$ because $\frac{(k-j) b}{n} \in [1-b, b]$.  Similarly $\P( \frac{kb}{n}-m_j <Y_i< \frac{(k-j+1)b}{n}) >0$ since $\frac{(k-j+1)b}{n} \in [1-b, b]$.   It follows that $\P(Z_{i+1}\in I_k | Z_{i}\in I_j)>0.$

We now write
$$\P_Z(\pi) = \P(\st(Z_1, Z_2, \ldots, Z_{n}) = \pi) \geq \P(Z_{i} \in I_{\pi(i)} \text{ for all } 1  \leq i \leq n).$$
Notice that the events $\P(Z_{i} \in I_k | Z_{i-1} \in I_j)$ and $\P(Z_{i'} \in I_k | Z_{i'-1} \in I_j)$ are independent since they only depend on $Y_i$, $Y_{i'}$, respectively, which are themselves independent for $i \neq i'$.  We now have
$$\P(Z_{i} \in I_{\pi(i)} \text{ for all } 1  \leq i \leq n) = \P(Z_1 \in I_{\pi(i)}) \prod_{i = 2}^n \P(Z_{i} \in I_{\pi(i)} | Z_{i-1} \in I_{\pi(i)-1}) > 0,$$  
and conclude that $\P_Z(\pi)> 0$.  
\end{proof}

In particular, Proposition \ref{prop:noforbidden} implies that in a time series $Z = \{Z_i\}_{i = 1}^\infty$ defined by a normal or uniform random walk model, we expect that
$$\lim_{N \rightarrow \infty} \# \text{Forbidden Patterns}(\{Z_i\}_{i=1}^N) \rightarrow 0.$$ 

Thus, as mentioned previously, for the random walks in Table \ref{tab:length2000}, the patterns that do not appear in the time series $X$ are not forbidden, but merely ``missing."  Note that this is another example of the divergence 
 of random walk models from the traditional methods that were motivated by one-dimensional dynamical systems.

\begin{prop} \label{prop:highestfreq} In an arbitrary random walk, the ordinal pattern occurring with the highest frequency is $12\dots (n{-}1)n$ if $\P(Y>0) \geq \frac{1}{2}$ and $n(n{-}1) \ldots 21$ if $\P(Y>0) \leq \frac{1}{2}$.   
\end{prop}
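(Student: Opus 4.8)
The plan is to reduce the statement to an elementary inequality involving only the \emph{signs} of the steps. Write $b = \P(Y>0)$; since $Y$ is continuous, $\P(Y<0)=1-b$. The key observation is that $Z_{i+1}-Z_i=Y_i$, so $Z_i<Z_{i+1}$ if and only if $Y_i>0$. Hence, if $\st(Z_1,\ldots,Z_n)=\pi$, then $\pi$ has an ascent in position $i$ (that is, $\pi(i)<\pi(i{+}1)$) exactly when $Y_i>0$ and a descent exactly when $Y_i<0$. Writing $a$ for the number of ascents of $\pi$ and $d=n-1-a$ for the number of descents, the event $\{\st(Z_1,\ldots,Z_n)=\pi\}$ is therefore contained in the event that $Y_i>0$ at the $a$ ascent positions and $Y_i<0$ at the $d$ descent positions; as these constrain distinct, independent steps, this yields $\P_Z(\pi)\le b^{a}(1-b)^{d}$.

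First I would record the two extreme cases. For the identity $12\ldots(n{-}1)n$ we have $a=n-1$ and $d=0$, and here the containment above is an equality, since $Z_1<Z_2<\cdots<Z_n$ is equivalent to $Y_i>0$ for all $i$ with no further constraint; thus $\P_Z(12\ldots(n{-}1)n)=b^{n-1}$. Likewise $\P_Z(n(n{-}1)\ldots21)=\P(Y_i<0 \text{ for all } i)=(1-b)^{n-1}$. Now, if $\P(Y>0)\ge\frac{1}{2}$, then $1-b\le b$, so for every $\pi\in\S_n$ we get $\P_Z(\pi)\le b^{a}(1-b)^{d}\le b^{a}b^{d}=b^{n-1}=\P_Z(12\ldots(n{-}1)n)$, proving the first claim; the middle inequality is strict whenever $d\ge1$, so for $b>\frac{1}{2}$ the identity is in fact the unique maximizer. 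The case $\P(Y>0)\le\frac{1}{2}$ is symmetric: either run the same estimate using $b\le 1-b$ to get $\P_Z(\pi)\le b^{a}(1-b)^{d}\le (1-b)^{a}(1-b)^{d}=(1-b)^{n-1}=\P_Z(n(n{-}1)\ldots21)$, or apply the first case to the negated walk $-Z$, whose steps $-Y_i$ satisfy $\P(-Y>0)=1-b\ge\frac{1}{2}$ and whose ordinal patterns are the complements of those of $Z$, so that the maximizer $12\ldots(n{-}1)n$ for $-Z$ corresponds to $n(n{-}1)\ldots21$ for $Z$.

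All the substance is in the first paragraph, and I do not expect a real obstacle---only two points of care. First, continuity of $Y$ is needed so that the coincidences $Y_i=0$ and $Z_i=Z_j$ (for $i\ne j$) occur with probability zero, making the ascent/descent description valid almost surely and the ordinal pattern well defined. Second, one must not overreach: the event $\{\st(Z)=\pi\}$ pins down far more than the signs of the $Y_i$ (it involves inequalities among the partial sums themselves), but the proof needs only the containment in the sign event, and that containment is an equality precisely for the two monotone patterns, which is exactly why they are the extremes.
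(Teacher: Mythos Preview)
Your proof is correct and follows essentially the same approach as the paper's: bound $\P_Z(\pi)$ above by the probability of the sign event $\P(Y<0)^{d}\P(Y>0)^{a}$ determined by the descent set of $\pi$, observe that this bound is an equality precisely for the two monotone permutations, and then use $b\ge\tfrac12$ (resp.\ $b\le\tfrac12$) to compare. Your write-up is in fact more explicit than the paper's, spelling out the final comparison $b^{a}(1-b)^{d}\le b^{n-1}$ and the continuity caveats that the paper leaves implicit.
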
\begin{proof}    Let $\pi \in S_n$ and let $\{i_1, i_2, \ldots, i_k\}$ be the descent set of $\pi$.  It follows that
$$\mathbb{P}(\pi) \leq \mathbb{P}(\pi \in S_n: \text{des}(\pi) = \{i_1, i_2, \ldots, i_k\}) = \P(Y<0)^k \P(Y>0)^{n-k-1}.$$   
Moreover, when $\P(Y>0) \geq \frac{1}{2}$, equality only occurs for $\pi = 12 \ldots n$; and when $\P(Y<0) \leq \frac{1}{2}$ equality only occurs for $\pi = n (n{-}1) \ldots 21$.  \end{proof}

This result mirrors recent work on permutons \cite{kenyon_permutations_2015}, where a similar result is obtained in the limiting case. For our purposes, this is enough to show that the distribution of patterns in a random walk can never match the uniform distribution derived from i.i.d. random data.

\begin{prop} For any random walk $Z$, the distribution of patterns of length $n$ (for $n \geq 3$) is not the uniform distribution. 
\end{prop}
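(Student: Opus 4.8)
The plan is to argue by contradiction. Suppose that for some random walk $Z$ — with i.i.d.\ continuous steps $\{Y_i\}$ and $Z_i = \sum_{j<i} Y_j$ — the distribution of length-$n$ patterns is uniform, i.e.\ $\mathbb{P}_Z(\pi) = 1/n!$ for every $\pi \in \mathcal{S}_n$. I would extract a numerical contradiction by evaluating this hypothesis on the two extremal patterns, the increasing pattern $12\cdots n$ and the decreasing pattern $n(n{-}1)\cdots 1$, whose probabilities can be written down exactly.

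First I would record that, since the $Y_i$ are continuous, $\mathbb{P}(Y = 0) = 0$, and that $Z_i < Z_{i+1}$ precisely when $Y_i > 0$. Writing $p = \mathbb{P}(Y>0)$ (so $\mathbb{P}(Y<0) = 1-p$), independence of the steps gives $\mathbb{P}_Z(12\cdots n) = \mathbb{P}(Y_1 > 0, \ldots, Y_{n-1} > 0) = p^{\,n-1}$ and, symmetrically, $\mathbb{P}_Z(n(n{-}1)\cdots 1) = (1-p)^{\,n-1}$. (One could instead quote Proposition 3, which already isolates these two patterns as the most frequent, but the direct computation is shorter and covers the degenerate cases uniformly.)

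Next I would impose uniformity: it forces $p^{\,n-1} = 1/n! = (1-p)^{\,n-1}$. Since $x \mapsto x^{\,n-1}$ is injective on $[0,\infty)$, the outer equality gives $p = 1-p$, hence $p = \tfrac12$, and then $1/n! = (\tfrac12)^{\,n-1}$, that is $n! = 2^{\,n-1}$. To finish I would note this is impossible for $n \ge 3$: one has $3! = 6 > 4 = 2^{2}$, and since $n+1 \ge 2$ a one-line induction gives $(n{+}1)! \ge 2 \cdot n! > 2 \cdot 2^{\,n-1} = 2^{\,n}$, so $n! > 2^{\,n-1}$ for all $n \ge 3$, contradicting the displayed equation.

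The argument is entirely elementary, so I do not expect a genuine obstacle; the only points needing a word of care are (i) using that ``random walk'' supplies continuity of the steps, which guarantees $\mathbb{P}(Y=0)=0$ and that the ordinal pattern of $(Z_1,\dots,Z_n)$ is almost surely well defined, and (ii) the degenerate cases $p \in \{0,1\}$, which are already handled by the same computation since then $\mathbb{P}_Z(12\cdots n)$ or $\mathbb{P}_Z(n(n{-}1)\cdots 1)$ equals $0$ or $1$, and neither equals $1/n!$ when $n \ge 3$.
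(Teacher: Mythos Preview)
Your proposal is correct and follows essentially the same approach as the paper: compute $\mathbb{P}_Z(12\cdots n) = p^{\,n-1}$ and $\mathbb{P}_Z(n(n{-}1)\cdots 1) = (1-p)^{\,n-1}$, deduce from uniformity that $p = \tfrac12$, and conclude via $2^{\,n-1} \neq n!$ for $n \ge 3$. Your write-up is slightly more careful about the degenerate cases $p \in \{0,1\}$ and about justifying $n! \neq 2^{\,n-1}$, but the argument is the same.
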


\begin{proof}  We have that \begin{eqnarray*}
\mathbb{P}_Z(12\ldots (n{-}1)n) &=&   \prod_{i = 1}^{n-1} \mathbb{P}(Y_i > 0) = \P(Y>0)^{n-1} \\  \mathbb{P}(n (n{-}1) \ldots 2 1) &=& \prod_{i = 1}^{n-1} \mathbb{P}(Y_i < 0) = \P(Y <0)^{n-1}.  \end{eqnarray*}
In particular $\mathbb{P}_Z(12 \ldots (n{-}1)n) = \mathbb{P}_Z(n(n{-}1) \ldots 21)$ implies that $\P(Y>0) = \P(Y<0)=\frac{1}{2}$.  In such a case, we would obtain 
$$\mathbb{P}_Z(12 \ldots (n{-}1)n) = \mathbb{P}_Z(n (n{-}1) \ldots 21) = \frac{1}{2^{n-1}}.$$
If the patterns of length $n$ had the uniform distribution, then $\mathbb{P}_Z(\pi) = \frac{1}{n!}$ for each $\pi \in \mathcal{S}_n$.  But this is impossible because $\frac{1}{2^{n-1}} \neq \frac{1}{n!}$ when $n \geq 3$.  
 \end{proof}  

 This result implies that for data sets derived from a random walk, the distribution of ordinal patterns must differ from the uniform distribution enforced by random data. Figure 1 shows examples of the characteristic shapes of distributions that arise from random walks. The symmetry apparent in the first two distributions of patterns is not coincidental. It arises from the symmetry in the definition of the random walks which we discuss more fully in Section 3.


\begin{figure}[!h]

\begin{tikzpicture}[scale = .65]
\begin{axis} [ybar, bar width = 4.5pt,
    ymin=0,
      ymax=.15,
         ylabel = $P_Z(\pi)$,
                  xlabel=$\pi$,
     ytick={.05,.1, .15},
        xticklabels={, , , , , },
     yticklabels={$.05$, $.10$, $.15$},
    xticklabel style={anchor=east, rotate= 90},
 width = 8 cm, 
 height = 6 cm, 
]
\addplot[ybar, color = orange,  fill=orange]
table[x = x, y = unif] {\histfourmodels};
\end{axis} 
\end{tikzpicture}
\quad
\begin{tikzpicture}[scale = .65]
\begin{axis} [ybar, bar width = 4.5pt,
    ymin=0,
      ymax=.15,
         ylabel = $P_Z(\pi)$,
                  xlabel=$\pi$,
     ytick={.05,.1, .15},
     yticklabels={$.05$, $.10$, $.15$},
        xticklabels={, , , , , },
    xticklabel style={anchor=east, rotate= 90},
 width = 8 cm, 
 height = 6 cm, 
]
\addplot[ybar, color = orange,fill=orange]
table[x = x, y = stn] {\histfourmodels};
\end{axis} 
\end{tikzpicture}
\quad
\begin{tikzpicture}[scale = .65]
\begin{axis} [ybar, bar width = 4.5pt,
    ymin=0,
    xticklabel style={anchor=east, rotate= 90},
    ymax=.30,
     ytick={.05,.1, .15, .20, .25, .30 },
         ylabel = $P_Z(\pi)$,
         xlabel=$\pi$,
        xticklabels={, , , , , },
     yticklabels={, $.10$,  , $.20$, , $.30$},
 width = 8 cm, 
 height = 6 cm, 
]
\addplot[ybar,color = orange, fill=orange]
table[x = x, y = drift] {\histfourmodels};
\end{axis} 
\end{tikzpicture}

\caption{The distribution of patterns of length $n = 4$, listed in lexicographical order, for (a) the normal random walk with $\mu = 0$, (b) the uniform random walk with $\P(Y > 0) = .5$ and (c) the uniform random walk with $\P(Y > 0) = .65$, computed using Proposition 5.} 
\label{fig:differentsteps}
\end{figure}
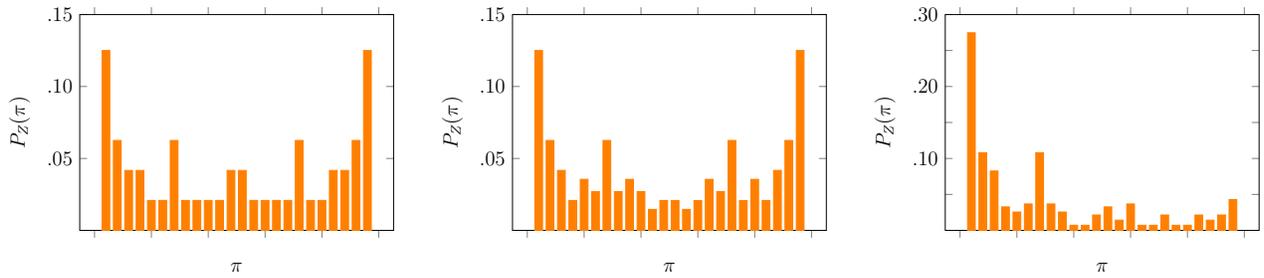

\subsection{Random Walk Distributions}
We next describe how to compute the expected distributions of ordinal patterns for uniform and normal walks. These values allow us to measure the KL divergence from an empirical data set to the expected values under a random walk null model. 

\begin{prop}\label{prop:regionintegrate}  For a uniform or normal random walk $Z$, the value $\P_Z(\pi)$ for $\pi \in \S_n$ can be interpreted as a volume of a region in an $(n{-}1)$-dimensional surface and bounded by certain hyper-planes through the origin (see Figure \ref{fig:regionsintegrate}).
\end{prop}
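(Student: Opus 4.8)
The plan is to express $\P_Z(\pi)$ as the integral of a product density over a polyhedral cone in the space of steps, and then to use the special form of the normal and uniform step densities to read that integral off as a volume. First I would pass to the $n-1$ step variables. With $Z_i = \sum_{j=1}^{i-1}Y_j$, the ordinal pattern $\st(Z_1,\dots,Z_n)$ is a function of $(Y_1,\dots,Y_{n-1}) \in \mathbb{R}^{n-1}$ alone, and for $a<b$ one has $Z_b - Z_a = \sum_{j=a}^{b-1}Y_j$. Since (by the definition of $\st$) $\pi(i)$ is the rank of $Z_i$, the event $\{\st(Z_1,\dots,Z_n)=\pi\}$ is exactly
\[
C_\pi \;=\; \Bigl\{\, y \in \mathbb{R}^{n-1} \;:\; \operatorname{sgn}\!\Bigl({\textstyle\sum_{j=a}^{b-1}y_j}\Bigr) = \operatorname{sgn}\bigl(\pi(b)-\pi(a)\bigr) \ \text{ for all } 1 \le a < b \le n \,\Bigr\},
\]
an intersection of open half-spaces whose bounding hyperplanes $\{\sum_{j=a}^{b-1}y_j = 0\}$ all contain the origin. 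Hence $C_\pi$ is an open polyhedral cone with apex at $0$, its facets lying on hyperplanes through the origin, and $\{C_\pi\}_{\pi\in\S_n}$ tiles $\mathbb{R}^{n-1}$ up to a set of Lebesgue measure zero. Consequently
\[
\P_Z(\pi) \;=\; \int_{C_\pi} f(y_1)\cdots f(y_{n-1})\,dy_1\cdots dy_{n-1},
\]
where $f$ is the common density of the steps; this is already a volume description, and the rest is identifying the ``surface'' in the two models of interest.

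For the normal walk with $\mu=0$ and variance $\sigma^2$, the product density $(2\pi\sigma^2)^{-(n-1)/2}\exp(-\|y\|^2/2\sigma^2)$ is rotation-invariant, so writing $y=r\omega$ with $r>0$ and $\omega\in S^{n-2}\subset\mathbb{R}^{n-1}$ separates the radial and angular integrals; since $C_\pi$ is a cone the radial factor is the same for every $\pi$, and matching against $\sum_\pi \P_Z(\pi)=1$ fixes it, giving
\[
\P_Z(\pi) \;=\; \frac{\operatorname{vol}_{n-2}\!\bigl(C_\pi \cap S^{n-2}\bigr)}{\operatorname{vol}_{n-2}\!\bigl(S^{n-2}\bigr)},
\]
the normalized surface measure of the region carved out of the unit sphere of $\mathbb{R}^{n-1}$ by the great subspheres $\{\sum_{j=a}^{b-1}\omega_j=0\}$ — exactly the picture in Figure \ref{fig:regionsintegrate}. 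For the uniform walk with $\P(Y>0)=b$, the steps are uniform on $[b-1,b]$, so $f$ is a normalized indicator and the product density is constant on the cube $[b-1,b]^{n-1}$; hence
\[
\P_Z(\pi) \;=\; \operatorname{vol}_{n-1}\!\bigl(C_\pi \cap [b-1,b]^{n-1}\bigr),
\]
the Euclidean volume of the bounded polytope obtained by intersecting the cone $C_\pi$ (bounded by hyperplanes through the origin, as above) with the cube; Proposition \ref{prop:noforbidden} guarantees this has positive volume.

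I do not expect a genuine obstacle: the substance is the change of viewpoint, and the only steps requiring care are (i) verifying that the sign vector of the consecutive partial sums both determines and is determined by $\pi$, so that the description of $C_\pi$ is correct and the $C_\pi$ tile the step space up to measure zero; and (ii) in the normal case, justifying the polar-coordinate factorization and the cancellation of the $\pi$-independent radial integral. Both are routine. The one mildly delicate point is expository: the phrase ``volume of a region in an $(n{-}1)$-dimensional surface bounded by hyper-planes through the origin'' must be read as spherical measure on $S^{n-2}\subset\mathbb{R}^{n-1}$ in the $\mu=0$ normal case, and as ordinary $(n{-}1)$-dimensional volume inside the step cube in the uniform case — precisely the two configurations Figure \ref{fig:regionsintegrate} illustrates.
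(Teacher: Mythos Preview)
Your argument is correct and follows the same route as the paper: pass to the step variables $(Y_1,\dots,Y_{n-1})$, rewrite the event $\{\st(Z_1,\dots,Z_n)=\pi\}$ as a system of linear inequalities in the $Y_j$, and observe that the bounding hyperplanes all pass through the origin. The paper's proof handles only the uniform case explicitly (the cube $[b-1,b]^{n-1}$) and defers the spherical interpretation for the $\mu=0$ normal case to a remark after the proof; you work out both cases in full, which is a genuine improvement. One cosmetic difference: the paper uses only the $n-1$ consecutive-rank inequalities $Z_{\pi^{-1}(i)}<Z_{\pi^{-1}(i+1)}$, whereas you list all $\binom{n}{2}$ pairwise sign conditions---these cut out the same cone, but the smaller set is what yields the paper's description of the facets as $\sum_j a_j y_j=0$ with the $a_j\in\{0,1\}$ supported on a block of consecutive indices.
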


\begin{proof} Let  $b = \P(Y>0)$.   We graphically represent the joint distribution $\{Y_1, Y_2, \ldots, Y_{n-1}\}$ as an $(n{-}1)$-dimensional region.  Thus, the probability density function is uniform on the $(n{-}1)$-dimensional cube $[b-1, b]^{n-1}$, which is partitioned by patterns according to the relative order of $(Z_1, Z_2, \ldots, Z_{n}) = (0, Y_1,  \ldots, Y_1 + Y_2 + \ldots + Y_{n-1})$.   

 For example, consider $\P(\pi) =\P(Z_{\pi^{-1}(1)} < Z_{\pi^{-1}(2)}< \ldots < Z_{\pi^{-1}(n)})$.  If $\pi^{-1}(i + 1) > \pi^{-1}(i)$, the inequality $Z_{\pi^{-1}(i +1)} <  Z_{\pi^{-1}(i)}$  becomes $$0 < Y_{\pi^{-1}(i + 1)} + Y_{\pi^{-1}(i + 1) - 1} + \ldots Y_{\pi^{-1}(i) + 1}.$$ 
 Similarly, if $\pi^{-1}(i+1) < \pi^{-1}(i)$,  the inequality $Z_{\pi^{-1}(i +1)} <  Z_{\pi^{-1}(i)}$  becomes $$0 < Y_{\pi^{-1}(i)} + Y_{\pi^{-1}(i)  - 1} + \ldots + Y_{\pi^{-1}(i+1) + 1}.$$ 
It follows that the regions in the hypercube such that $Z_{\pi^{-1}(1)} < Z_{\pi^{-1}(2)}< \ldots < Z_{\pi^{-1}(n)}$ are bounded by certain hyperplanes of the form $a_1 x_1 + a_2 x_{2} + \ldots + a_{n-1} x_{n-1} = 0$, for $a_1, a_2, \ldots, a_{n-1} \in \{0, 1\}$, and so, in all cases $\P_Z(\pi)$ can be interpreted as a volume of a region in an $(n{-}1)$-dimensional cube bounded by hyperplanes through the origin. \end{proof}

\begin{figure}[!h]

\begin{tikzpicture}[scale = .85]

\draw[-, line width=2pt] (4*.65, 4*-.35)--(4*.65, 4*.65);
\draw[-, line width=2pt] (4*-.35, 4*.65)--(4*.65, 4*.65);
\draw[-, line width=2pt] (4*-.35, 4*-.35)--(4*.65, 4*-.35);
\draw[-, line width=2pt] (4*-.35, 4*-.35)--(4*-.35, 4*.65);
\draw[-, line width=1pt] (4*-.35, 0)--(4*.65, 0);
\draw[-, line width=1pt] (0, 4*-.35)--(0, 4*.65);
\draw[-, line width=1pt] (4*-.35, 4*.35)--(4*.35,4* -.35);
\draw (1.55, 1.55) node {\scriptsize{123}};
\draw (-.8, -.8) node {\scriptsize{321}};
\draw (-.8, .35) node {\scriptsize{312}};
\draw (-.8, 1.55) node {\scriptsize{213}};
\draw (.35, -.8) node {\scriptsize{231}};
\draw (1.55, -.8) node {\scriptsize{132}};
\draw (1.85, -2.1) node {$Y_1$};
\draw (2.55, -1.75) node {$b$};
\draw (-2.25, 1.75) node {$Y_2$};
\draw (0, -1.75) node {$0$};
\draw (-1.95, -1.25) node {$b{ -}1$};
\draw (-.95, -1.75) node {$b{ -}1$};
\draw (-1.85, 2.5) node {$b$};
\draw (-1.85, 0) node {$0$};
\end{tikzpicture}
\qquad
\includegraphics[height=1.66in]{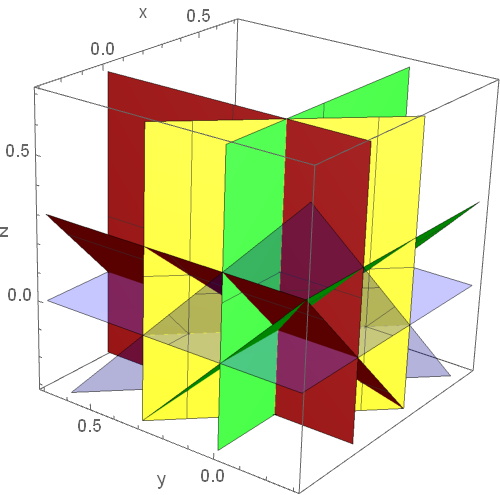}

\caption{The regions of integration for patterns in uniform random walks for (a) $n = 3$ and (b) $n = 4$, sketched here for $ b= .65$.} 
\label{fig:regionsintegrate}
\end{figure}

For the random walk with uniform steps and $\P(Y>0) \geq \frac{1}{2}$, the probability of each of the ordinal patterns of length 3 and those of length 4 occurring are given in the Table \ref{tab:normunifprobabilites} in Appendix A. 
For a random walk with normal steps and $\mu = 0$, the probability of each of the ordinal patterns of length 3 and length 4 are computed using Proposition \ref{prop:regionintegrate} and the spherical symmetry of the multi-variate normal distribution and the area of spherical triangles.  In particular, when $\mu = 0$, the spherical symmetry of sums of normally distributed random variables tells us that the distribution of patterns is independent of the variance, but this is not the case when $\mu \neq 0$.  

This result allows us to determine the expected behavior of the distribution of ordinal patterns under the assumption that the data was generated by a particular random walk null model. Thus, we can compute the KL divergence between the expected value and  empirical data to measure the portion of the behavior explained by the random walk model.  


%
%

\subsection{Examples}

We conclude this section with two examples highlighting the differences between our models and the i.i.d. model that underlies permutation entropy. This allows us to demonstrate that for some data sets, the distributions derived from a random walk model matches empirical data quite closely compared to the uniform distribution.

 We begin by constructing a time series of length 2000 from a uniform random walk (U-DRIFT RW) by fixing $b=\P(Y>0)=.65$ and comparing the distribution of patterns of length 4 to the values derived from Proposition 5 as well as the uniform values of $\frac1{24}$. Figure 3 displays these results, the observed distributions are plotted in blue (on both graphs) while the gray bars represent the expected random walk distribution (left) and uniform distribution (right).
 
  As expected, the observed values match the null model distributions much more closely than the uniform distribution. Note that the expected and observed values on the left do not match exactly because the emprical time series has finite length. This is a common feature of time series data that is observed throughout this paper. 
  
\begin{figure}[!h]
\begin{tikzpicture}[scale = .8]
\begin{axis} [ybar, bar width = 8pt, bar shift=0pt,
    ymin=0,
  ytick={.05, .10, .15, .20, .25},
     yticklabels={$.05$, $.10$, $.15$, $.20$, $.25$},     
    xticklabel style=,
       xticklabels={, , , , , },
    ylabel = $p_{\pi}$,
 width = 11 cm, 
 height = 6 cm, 
]
\addplot[ybar, fill opacity=0.5, fill=black]
table[x = x, y = trueunif] {\matchinghistunifdrift};
\addplot[ybar, fill opacity=0.5, fill = blue]
table[x = x, y = simulateddrift] {\matchinghistunifdrift};
\end{axis} 

\end{tikzpicture}
\begin{tikzpicture}[scale = .8]
\begin{axis} [ybar, bar width = 8pt, bar shift=0pt,
    ymin=0,
  ytick={.05, .10, .15, .20, .25},
     yticklabels={$.05$, $.10$, $.15$, $.20$, $.25$},  
        xticklabels={, , , , , },   
    xticklabel style=,
    ylabel = $p_{\pi}$,
 width = 11 cm, 
 height = 6 cm, 
]
\addplot[ybar, fill opacity=0.5, fill=black]
table[x = x, y = flat] {\matchinghistunifdrift};
\addplot[ybar, fill opacity=0.5, fill = blue]
table[x = x, y = simulateddrift] {\matchinghistunifdrift};
\end{axis} 

\end{tikzpicture}
\caption{The distribution of patterns of length $n = 4$ in U-DRIFT RW, listed in lexicographical order compared to (a) the true distribution of patterns in the uniform random walk on with $\P(Y > 0) = .65$ (see Table \ref{tab:normunifprobabilites}) and (b) the distribution of patterns in white noise.}
\end{figure}
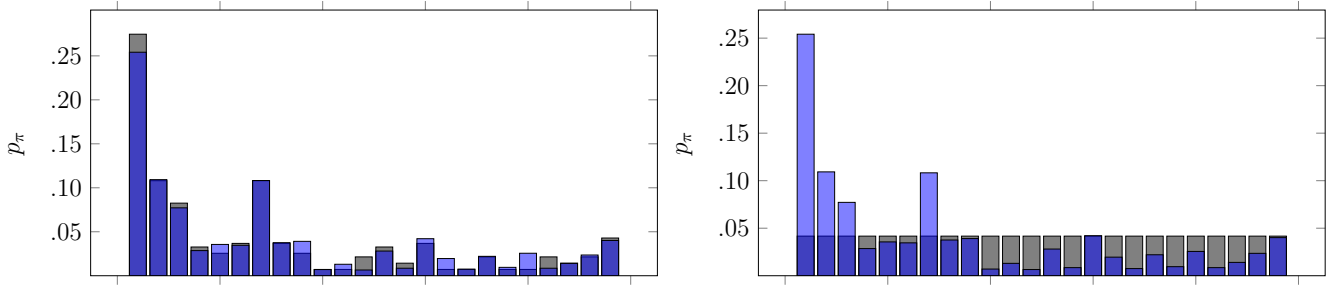

We next consider a similar analysis for economic market data, using the closing prices of the S\& P 500 over a seven year period (SP500). For this example, we need to estimate an underlying distribution. To do this we calculate the sequence of steps $\{X_{t+1} - X_{t}\}_{t = 1}^\infty$ (called the stock \textit{returns})and find the best fit normal curve; in this case obtaining parameters $(\mu, \sigma) = (0.702, 14.945)$. The null model for SP500 is the distribution of patterns for the normal random walk with these parameters. Using a simulated normal random walk we approximate the distribution of patterns for a fixed $n$.  

This null model is shown in Figure 4 for $n=4$ and $n=5$. Note that this data displays a very similar shape to those in Figures 1 and 3 and is highly non--uniform. This reinforces our conclusion that modeling some time series with random walk null models more effectively describes the behavior in this case than permutation entropy.

\begin{figure}[!h]
\begin{tikzpicture}[scale = .8]
\begin{axis} [ybar, bar width = 8pt,
    ymin=0,
  ytick={.05, .10, .15},
     yticklabels={$.05$, $.1$, $.15$},     
   xticklabels={, , , , , },
    ylabel = $p_{\pi}$,
 width = 11 cm, 
 height = 6 cm, 
]
\addplot[ybar, fill=blue]
table[x = x, y = y] {\HistSPNullModelFour};
\end{axis} 

\end{tikzpicture}
\begin{tikzpicture}[scale = .8]
\begin{axis} [ybar, bar width = 1pt,scaled y ticks=false,   
   	 ymin=0,
   	 ymax = .08,
     ytick={.02, .04, .06},
         ylabel = $p_{\pi}$,
     yticklabels={$.02$, $.04$, $.06$},   
        xticklabels={, , , , , },
 width = 10 cm, 
 height = 6 cm, 
]
\addplot[ybar, color = orange, fill = orange]
table[x = x, y = y] {\HistSPNullModelFive};
\end{axis} 
\end{tikzpicture}

\caption{The distribution of patterns, listed in lexicographical order, for the uniform random walk null model for SP500 of length (left) $n = 4$ and (right) $n  = 5$. Note that the distributions are  far from uniform as is characteristic of random walk data. } 
\end{figure}
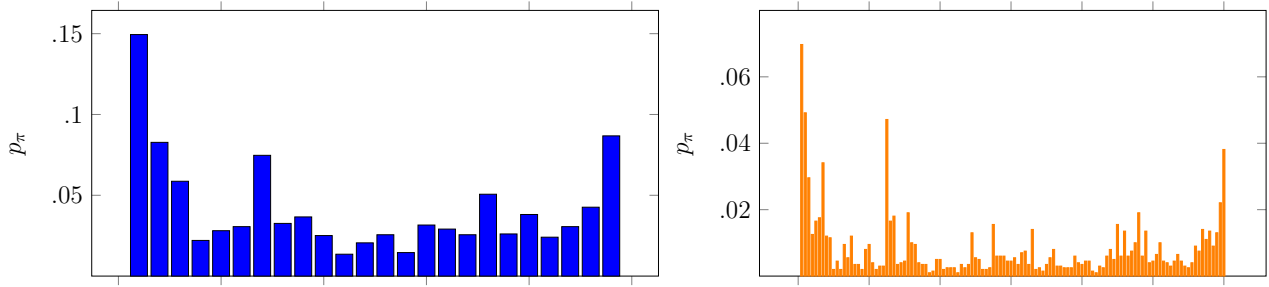

This example was computed with respect to a particular null model, however, there are many  options for selecting the distribution of steps $Y$. A discussion of the possible inferential processes for selecting $Y$ given a particular data set is beyond the scope of this paper. However, for the purposes of comparing to permutation entropy we consider several difference choices of $Y$ and compare their performance to the uniform distribution. These results are summarized in Figure 5 below. 

We compare the distributions derived from the actual SP500 data to three random walk null models: (a)the normally distributed model described above with $(\mu, \sigma) = (0.702, 14.945)$, (b) a uniform model with $\P(Y>0)=p_{12}=.5441$,  and (c) a uniform model fitting the stock returns with $\P(Y>0)=.5279$. The error between the expected values and the empirical values are shown for each permutation in Figure 5. Notice that each of the random walk models significantly outperforms the permutation entropy model on almost all permutations. The sum of squared errors for randomness is 0.0213 and for each of the models is (a) 0.0018, (b) 0.0027, (c) 0.0031.    Although there is some variance among the random walk models, they each convincingly outperform the uniform distribution.



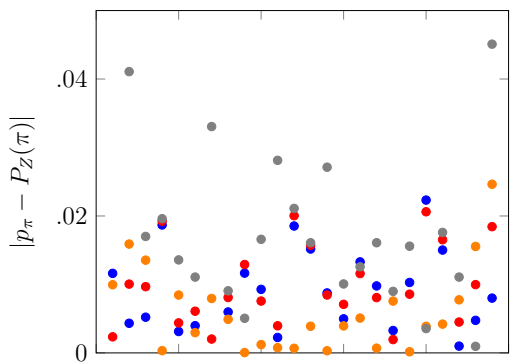
\begin{figure}[!h]
\begin{tikzpicture}[scale = .8]
\begin{axis}[
scaled y ticks=false,
ymax   = .05,
ymin   = 0,
     ytick={-0.04, -0.02, 0, .02, .04},
     yticklabels={$-.04$, $-.02$, $0$, $.02$, $.04$},   
xmax   = 25,
xmin   = 0,
ylabel = $|p_\pi-P_Z(\pi)|$,
xticklabels={, , , , , , }
]

\addplot[color = blue,
fill  = blue,
mark  = *,
only marks]
table[x = pat, y = unifb] {\fittingerrors};

\addplot[color = orange,
fill  = orange,
mark  = *,
only marks]
table[x = pat, y = normalmean] {\fittingerrors};

\addplot[color = red,
fill  = red,
mark  = *,
only marks]
table[x = pat, y = unifmean] {\fittingerrors};

\addplot[color = gray,
fill  = gray,
mark  = *,
only marks]
table[x = pat, y = random] {\fittingerrors};

\end{axis}
\end{tikzpicture}

\caption{Comparison of null model distributions for the SP500 data to the uniform distribution. The difference $|p_\pi-P_Z(\pi)|$ is plotted for each of the four null models: $Y_1=N(0.702, 14.945)$ (orange), $Y_2=U(.5441)$ (blue), $Y_3=U(.5279)$ (red), and the uniform distribution (gray). Values for the null models are derived via Proposition 5. }
\end{figure}

\section{Equality in Any Random Walk}\label{sec:MM}

Although the distributions of permutations under random walk null models are not uniform they are still constrained in some ways by the structure of the models, particularly the assumption of i.i.d. steps. This is reflected in the characteristic distribution shapes displayed in the figures above.  The possible behaviors of these models was recently considered in \cite{martinez_frequency_2015}, giving a classification of permutations that must occur with the same  probability in any random walk model. Here we use related results to characterize distribution in terms of their proximity to the random walk constraints.

The existence of nontrivial equivalence classes of permutations that appear with the same frequency in any random walk is an important distinguishing characteristic of patterns in this context.  
To illustrate the symmetries underlying this feature of permutations in a random walk, we present two results describing constraints that must occur in this setting. 




\begin{prop}\label{prop:sym}  If $Z$ is a random walk with symmetric steps, then 
$$\mathbb{P}_Z(\pi) = \mathbb{P}_Z(\pi^c),$$
where $\pi^c(i):= (n+1) - \pi(i)$ is the complement of $\pi$.  
\end{prop}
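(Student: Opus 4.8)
The plan is to exploit the fact that a symmetric step distribution makes the random walk invariant in law under global negation. Writing $Z_i = \sum_{j=1}^{i-1} Y_j$, the sequence $(-Z_1, \ldots, -Z_n)$ is precisely the partial-sum process of the steps $(-Y_1, \ldots, -Y_{n-1})$. Since the $Y_j$ are i.i.d.\ and each is symmetric, the family $(-Y_1, \ldots, -Y_{n-1})$ has the same joint distribution as $(Y_1, \ldots, Y_{n-1})$, and applying the deterministic partial-sum map to both gives $(-Z_1, \ldots, -Z_n) \stackrel{d}{=} (Z_1, \ldots, Z_n)$. Continuity of the steps guarantees that the partial sums are almost surely distinct, so $\st$ is defined with probability one and these statements are unambiguous.

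The second ingredient is the elementary observation that negating every entry of a sequence of distinct reals complements its ordinal pattern. If $\pi = \st(x_1, \ldots, x_n)$, then $\pi(i)$ is the rank of $x_i$ among $x_1, \ldots, x_n$, so the rank of $-x_i$ among $-x_1, \ldots, -x_n$ is $n+1-\pi(i)$, whence $\st(-x_1, \ldots, -x_n) = \pi^c$. (Equivalently, the chain $x_{\pi^{-1}(1)} < \cdots < x_{\pi^{-1}(n)}$ becomes $-x_{\pi^{-1}(n)} < \cdots < -x_{\pi^{-1}(1)}$.) I would check this index bookkeeping with care, as it is the one place an error could slip in: negating the values produces the complement $\pi^c$, not the reverse $i \mapsto \pi(n{+}1{-}i)$, which would instead arise from reversing the order of the list.

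Combining the two, the conclusion follows:
\begin{align*}
\mathbb{P}_Z(\pi^c) &= \mathbb{P}\big(\st(Z_1, \ldots, Z_n) = \pi^c\big) = \mathbb{P}\big(\st(-Z_1, \ldots, -Z_n) = \pi\big) \\
&= \mathbb{P}\big(\st(Z_1, \ldots, Z_n) = \pi\big) = \mathbb{P}_Z(\pi),
\end{align*}
where the second equality uses the negation--complement identity and the third uses $(-Z_1, \ldots, -Z_n) \stackrel{d}{=} (Z_1, \ldots, Z_n)$. I do not anticipate a genuine obstacle here; the argument is short, and the only care needed is the permutation bookkeeping in the negation step together with the routine observation that continuity of the steps rules out ties among the $Z_i$.
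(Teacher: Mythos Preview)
Your proof is correct and follows essentially the same approach as the paper: both arguments use that symmetry of the i.i.d.\ steps gives $(-Z_1,\ldots,-Z_n)\stackrel{d}{=}(Z_1,\ldots,Z_n)$, and that negating a tuple reverses the inequality chain $Z_{\pi^{-1}(1)}<\cdots<Z_{\pi^{-1}(n)}$, which corresponds exactly to the complement $\pi^c$. Your version is, if anything, a bit more careful in stating the joint distributional equality explicitly rather than only pairwise.
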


\begin{proof} 
Since $Z_j = \sum_{i = 1}^{j-1} Y_i$ is a sum of symmetric random variables, $Z_j$ is itself symmetric. Thus, for any $1 \leq j, k \leq n$, we have $\mathbb{P}(Z_j > Z_k) = \mathbb{P}(Z_j < Z_k)$.  It follows that 
$$\mathbb{P}_Z(\pi) = \mathbb{P}(Z_{\pi^{-1}(1)} < Z_{\pi^{-1}(2)} < \ldots < Z_{\pi^{-1}(n)}) = \mathbb{P}(Z_{\pi^{-1}(n)} < Z_{\pi^{-1}(n{-}1)} < \ldots < Z_{\pi^{-1}(1)}) = \mathbb{P}_Z(\pi^c).$$
\end{proof}

The symmetry condition in Proposition 6 is quite strong.  In particular, it will not apply to real world data containing drift or expected long term gain.  In contrast, Proposition 7 holds for any random walk, regardless of the underlying distribution of steps.

\begin{prop}[Martinez \cite{MeganThesis}] \label{prop:Megan} In an arbitrary random walk, $Z$, we have $$\P_Z(\pi) = \P_Z(\pi^{rc}),$$
where $\pi^{rc}(i) := (n+1) - \pi(n-i)$ is the reverse-complement of $\pi$.  
\end{prop}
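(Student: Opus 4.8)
The plan is to mimic the proof of Proposition \ref{prop:sym}, but instead of exploiting distributional symmetry of the $Z_j$, exploit the fact that reversing a random walk in time yields a random walk with the same step distribution. Concretely, given the finite walk $Z_1, Z_2, \ldots, Z_n$ with $Z_1 = 0$ and increments $Y_1, \ldots, Y_{n-1}$, I would introduce the time-reversed-and-shifted process $\widetilde{Z}_j := Z_{n+1-j} - Z_n$ for $1 \le j \le n$. Then $\widetilde{Z}_1 = 0$ and the increments of $\widetilde{Z}$ are $\widetilde{Y}_j = \widetilde{Z}_{j+1} - \widetilde{Z}_j = Z_{n-j} - Z_{n+1-j} = -Y_{n-j}$. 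Since the $Y_i$ are i.i.d., the tuple $(-Y_{n-1}, -Y_{n-2}, \ldots, -Y_1)$ is \emph{not} in general equal in distribution to $(Y_1, \ldots, Y_{n-1})$ — negation would require symmetry — so I would instead compare $\mathbb{P}_Z(\pi)$ directly to a reversed event and handle the negation by a second, separate reflection. The cleanest route: first show time-reversal alone gives $\mathbb{P}_Z(\pi) = \mathbb{P}_{Z'}(\pi^r)$ where $Z'$ is the walk with increments $(-Y_{n-1}, \ldots, -Y_1)$ read in that order, and $\pi^r(i) = \pi(n+1-i)$ is the reverse; then show negating all increments turns $Z'$ into a walk with increments $(Y_{n-1}, \ldots, Y_1)$ and sends the pattern to its complement, and finally use that $(Y_{n-1},\ldots,Y_1)$ is equal in distribution to $(Y_1,\ldots,Y_{n-1})$ by i.i.d.\ exchangeability. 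Composing, reverse-then-complement is exactly $\pi^{rc}$, and the distribution is unchanged, giving $\mathbb{P}_Z(\pi) = \mathbb{P}_Z(\pi^{rc})$.

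To carry this out carefully I would track the order statistics. By definition $\mathbb{P}_Z(\pi) = \mathbb{P}(Z_{\pi^{-1}(1)} < Z_{\pi^{-1}(2)} < \cdots < Z_{\pi^{-1}(n)})$. Since subtracting the constant $Z_n$ preserves all order relations, this equals $\mathbb{P}(\widetilde{Z}_{n+1-\pi^{-1}(1)} > \widetilde{Z}_{n+1-\pi^{-1}(2)} > \cdots > \widetilde{Z}_{n+1-\pi^{-1}(n)})$ after re-indexing $j \mapsto n+1-j$ and flipping the inequalities (the flip being harmless since we just read the chain of inequalities backwards). One then checks that the permutation recording the ranks of $\widetilde{Z}_1, \ldots, \widetilde{Z}_n$ in this event is precisely $\pi^{rc}$ — this is the bookkeeping step where the definitions $\pi^{rc}(i) = (n+1) - \pi(n-i)$ must be matched against the index substitutions, and it is the only place an off-by-one error could creep in. Finally, because $\widetilde{Z}$ has increments $-Y_{n-1}, \ldots, -Y_1$, and $(-Y_{n-1}, \ldots, -Y_1) \overset{d}{=} (-Y_1, \ldots, -Y_{n-1}) \overset{d}{=} (Y_1, \ldots, Y_{n-1})$ — wait, the last equality again needs symmetry, so I must instead observe that the \emph{pattern probability} is what we want, and pattern probabilities of $\widetilde{Z}$ depend only on the joint law of its increments, which by i.i.d.\ exchangeability (reversal of the index set) has the same pattern probabilities as the walk with increments $Y_1, \ldots, Y_{n-1}$ up to the complement induced by negation; tracing through, $\mathbb{P}_{\widetilde Z}(\pi^{rc})$ with increments $(-Y_{n-1},\ldots,-Y_1)$ equals $\mathbb{P}_Z(\pi^{rc})$ with increments $(Y_1,\ldots,Y_{n-1})$ because applying both ``negate all increments'' (a bijection on the probability space preserving the product law when composed with the complement of the pattern) and ``reverse the increment order'' (exchangeability) returns the original law, and the composite action on patterns is the identity after we have already applied $rc$. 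Hence $\mathbb{P}_Z(\pi) = \mathbb{P}_Z(\pi^{rc})$.

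The main obstacle is precisely this double reflection: a single reversal of time introduces \emph{both} a reversal and a negation of the increments, and negation of increments corresponds to taking the complement of the pattern, while reversal of the \emph{increment indices} (as opposed to time indices of the walk) is the exchangeability move that costs nothing for i.i.d.\ steps. Disentangling ``reverse the $Z$-indices'' from ``reverse the $Y$-indices'' and ``negate the $Y$'s,'' and confirming that the net effect on the pattern is exactly the reverse-complement $\pi^{rc}$ (and not $\pi^r$, $\pi^c$, or $\pi$ itself), is the delicate part; everything else is the same order-statistics manipulation already used in Propositions \ref{prop:randomness} and \ref{prop:sym}. I would double-check the final identification on a small case, say $n = 3$, verifying e.g.\ that $123$ and $321$ are exchanged (consistent with $\mathbb{P}_Z(123) = \mathbb{P}(Y>0)^2$ and $\mathbb{P}_Z(321) = \mathbb{P}(Y<0)^2$ generally \emph{differing}, so $rc$ must fix the pair $\{123,321\}$ only when combined with the identity — indeed $123^{rc} = 123$, so this is a consistency check rather than a contradiction) and that, say, $231^{rc} = 231$ while $132^{rc} = 213$, matching any directly computable instances.
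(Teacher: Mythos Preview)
Your core idea---time-reverse the walk and use exchangeability of the i.i.d.\ steps---is exactly the paper's approach, and the argument sketched in your first paragraph is correct. The paper's execution is cleaner, however: rather than pass through $\widetilde{Z}_j = Z_{n+1-j} - Z_n$ (which has increments $-Y_{n-j}$ and forces you to disentangle a negation), the paper defines directly $Z'_j := \sum_{i=1}^{j-1} Y_{n-i}$, the walk with merely \emph{reversed} increments. Exchangeability gives $Z' \overset{d}{=} Z$ at once, and the single algebraic identity $Z_j + Z'_{n+1-j} = Z_n$ (a constant independent of $j$) yields $Z_j < Z_k \iff Z'_{n+1-j} > Z'_{n+1-k}$, so $\{\st(Z) = \pi\} = \{\st(Z') = \pi^{rc}\}$ as events. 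Your $\widetilde{Z}$ is precisely $-Z'$, so your ``un-negate'' step just recovers the paper's object anyway; all the back-and-forth about needing symmetry in your second paragraph is an artifact of this detour, not a genuine obstacle.

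One small slip in your sanity check: $231^{rc} = 312$, not $231$ (reverse $231 \mapsto 132$, then complement $132 \mapsto 312$). This is consistent with Table~\ref{tab:normunifprobabilites}, where $231$ and $312$ share a row.
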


\begin{proof}  For the sequence of i.i.d. random variables $\{Y_i\}_{i = 1}^n$, define $Z'_{j} = \sum_{i = 1}^{j-1} Y'_{n+1-i}$.  It follows that $\P_Z(\pi) = \P_{Z'}(\pi)$ for all $\pi \in \S_n$.  For $1 \leq j, k \leq n$, we have $Z_j + Z_{n+1-j}' = \sum_{i = 1}^{n-1} Y_i$ and so  $Z_j < Z_k$ if and only if $Z'_{n+1-j} > Z_{n + 1-k}'$.  It follows that
$$\mathbb{P}(\pi)  = \mathbb{P}(Z_{\pi^{-1}(1)} < Z_{\pi^{-1}(2)} < \ldots < Z_{\pi^{-1}(n)}) $$
$$= \mathbb{P}(Z'_{n+1-\pi^{-1}(1)} > Z'_{n+1 - \pi^{-1}(2)} > \ldots > Z'_{n+1 - \pi^{-1}(n)}) = \P_{Z'}(\pi^{rc}) = \P_Z(\pi^{rc}).$$
 \end{proof}

In particular, Proposition \ref{prop:Megan} explains why the probability of certain permutations, such as $1243$ and $2134$, are equal in each of the distributions considered in Table \ref{tab:normunifprobabilites}.   In \cite{martinez_frequency_2015}, Proposition \ref{prop:Megan} is extended to give a complete characterization of the classes of patterns that appear with the same frequency, regardless of distribution.

In particular, the patterns that are listed in the same line in Table \ref{tab:normunifprobabilites} occur with the same probability in any random walk, regardless of the distribution associated to the steps. The full decomposition into equivalence classes is presented in Table 2 in Appendix B. This explicit decomposition had not been previously computed for $n=4, 5$.

Next we use this structure to define a simple test for determining whether a random walk may be an appropriate choice of model based on these equivalence classes.  For each equivalence class $\Lambda_i \subset \mathcal{S}_n$, of permutations occurring with the same probability in any random walk, define $\mu_i = \frac{1}{\Lambda_i} \sum_{\pi \in \Lambda_i} p_{\pi}$.  We let $g_n(T)$ be total variation from the mean across each equivalence class
$$g_n(T) = \sum_{\Lambda_i \subset S_n} \sum_{\pi \in \Lambda_i}  |p_{\pi} - \mu_i|.$$

Notice that in a random walk, $X = \{X_t\}_{t=1}^N$, and also in white noise, $X' = \{X'_t\}_{t = 1}^N$, we have
$$ \lim_{N \rightarrow \infty} g_n(X) = 0 \text{ and } \lim_{N \rightarrow \infty} g_n(X') = 0.$$
 
Thus, $g_n(X)$ is a measure of the amount that the distribution of permutations that remains unexplained by any random walk model. Figure 6 demonstrates how the value of $g_n(X)$ evolves for a normal random walk and a sequence of i.i.d. randomly generated data points (RAND)
for $n=4$ and $n=5$. As predicted above as $N\rightarrow\infty$ the values of $g_n(X)$ go to zero but that it requires a large number of data points, echoing our comment in Section 2.3 about time series data. This is further supported by the fact that we observe that the random walk and the i.i.d. model appear to converge at the same rate, suggesting that the discrepancy is caused by the finite number of time steps. 

As an example of a model that does not respect these classes, consider a sequence of random variables $Z = \{Z_i\}_{i = 1}^\infty$ and $Z_i = \sum_{j = 1}^{n-1} Y_j$ where the steps $Y_j$ are drawn from:
$$Y_j=\begin{cases} U([-.5,.5]) &Y_{j-1}\geq0 \\
U((0,-Y_{j-1}) &Y_{j-1}<0.\end{cases}$$
In this case the pattern 1243 occurs with a relatively high frequency but the pattern 2134 is forbidden, leading to a large value for $g_n(T)$, in expectation.  Notice that $Z$ is not a random walk because the steps $Y_j$ are not i.i.d. and hence this sequence does not contradict any of our previous propositions.

A limitation of this method is that equivalence classes for $12\ldots(n{-}1)n$ and $n(n{-}1)\ldots21$ consist of a single permutation and so will never contribute to the value of $g_n$. Additionally, as can be seen in Figure 6, the convergence to zero can be slow even for data drawn directly from a null model. An alternative measure is suggested by Proposition \ref{prop:highestfreq}, which gives $\P(12 \ldots (n{-}1)n) = b^{n-1}$ and $\P(n(n{-}1)\ldots 21) = (1-b)^{n-1}$.  Therefore, if a time-series $X$ is modeled by a random walk, we expect 
$$\varepsilon_{12\ldots(n{-}1)n} := p_{12\ldots(n{-}1)n} - (p_{12})^{n-1} \approx 0 \text{ and } \varepsilon_{n(n{-}1)\ldots 21}  := p_{n(n{-}1)\ldots21} - (p_{21})^{n-1} \approx 0.$$

In Section \ref{sec:KLstocks}, we calculate $\varepsilon_{12\ldots(n{-}1)n}$ and $\varepsilon_{n(n{-}1)\ldots21}$ for several stocks and discover the effects of market {\em momentum} in the data.

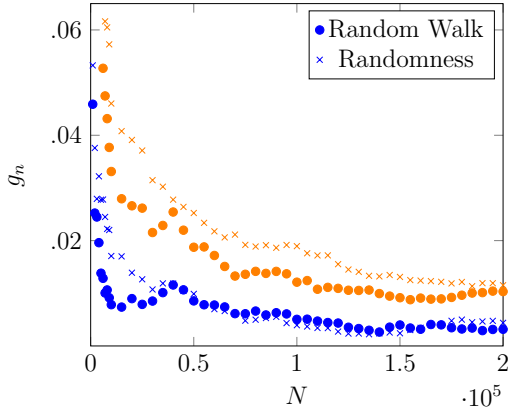
\begin{figure}

\begin{tikzpicture}[scale = .8]
\begin{axis}[
scaled y ticks=false,
ymax   = .065,
ymin   = 0,
     ytick={.02, .04, .06},
     yticklabels={$.02$, $.04$, $.06$},   
xlabel = $N$,
xmax   = 200000,
xmin   = 0,
ylabel = $g_n$
]

\addplot[color = blue,
fill  = blue,
mark  = *,
only marks]
table[x = x, y = y] {\rwnormalmeanzero};

\addplot[color = blue,
fill  = blue,
mark  = x,
only marks]
table[x = x, y = y] {\rwhitenoise};

\addplot[color = orange,
fill  = orange,
mark  = x,
only marks]
table[x = x, y = z] {\rwhitenoise};

\addplot[color = orange,
fill  = orange,
mark  = *,
only marks]
table[x = x, y = z] {\rwnormalmeanzero};

\legend{Random Walk, Randomness} 
\end{axis}
\end{tikzpicture}

\caption{For randomness and a simulated symmetric normal random walk of length $N$, we compute $g_n$ for (a) $n = 4$ in blue and (b) $n = 5$ in orange.}
 \label{figure:mm}
\end{figure}

\section{KL Divergence Method}
\label{sec:KLsetup}

As described in Section 1.2, it is natural to interpret the permutation entropy of a time series as a measure of the divergence of the distribution of ordinal patterns from the uniform distribution as in white noise. Here, we compute the KL divergence to the distribution of patterns determined by a random walk model as a more nuanced measure of complexity. This is particularly relevant for data that is expected to be generated from a random walk process, such as stock closing prices. We also consider some periodic weather and heart rate data whose behavior lies in between these extremes. 

 This measure more accurately reflects the underlying process that generates our data. This is important as it allows us to more accurately explain the behavior of the time series. Additionally, observed deviations from the model are more meaningful in this setting since the random walk is chosen as a purposeful null model, rather than occurring as an artifact, as in the case of permutation entropy.

In the remainder of this paper, we construct null models by sampling from the distribution of observed steps from the data as described below. This approach has two advantages, first, we need not artificially select a particular inferential framework and second, it allows us to control for variance by generating many samples and comparing them to the observed data. Differences between the models and the empirical time series are then related to correlation between the steps. 
 
To determine how the behavior these data sets deviate from a random walk, we compute the relative frequency $p_{\pi}$ of each of the patterns $\pi$ of length four in the daily closing values, $X$.  Next, we construct a random walk model, $Z$ of length $M \gg N$, whose steps are taken by drawing at random from from the distribution of steps $\{X_2 - X_1, X_3 - X_2, \ldots, X_n - X_{n-1}\}$ in the original time series; we refer to $Z$ as the \textit{random walk associated with} $X$.  For each time series, we determine the deviation  from the model by computing 
$$\DKL_n(X) := \DKL_n(X||Z) =  \frac{1}{\log(n!)} \sum_{ \pi \in \mathcal{S}_n} p_{\pi} \log\left(\frac{p_{\pi}}{q_{\pi}}\right),$$
where $p_{\pi}$ is the relative frequency of $\pi$ in $X$ and $q_{\pi}$ is the relative frequency in $\pi$ in $Z$.  

In order to directly compare our results to permutation entropy we computed $1-\PE_n$ and $\DKL_n(X)$ for each of the data sets RAND, HEART, MEX, NYC, SP500, GE, and NORM RW. The results are displayed in Figure 7. The permutation entropy is plotted on the left and the random walk KL on the right. 

\begin{figure}[!h]
\begin{tikzpicture}[scale = .8]
\begin{axis} [ybar, bar width = 8pt,
    ymin=0,
    ylabel = $1-\PE_n$, 
    scaled y ticks=false,
      ytick={.025, .05, .075, .1,.125,.15},
    yticklabels={$.025$, $.05$,$ .075$, $.1$,$.125$,$.15$ }, 
    symbolic x coords={RAND, HEART,  MEX, NYC,  SP500, GE, NORM RW},
    xticklabel style={anchor=east, rotate= 90},
 width = 11 cm, 
 height = 6 cm, 
    xtick=data
]
\addplot[ybar, fill=blue]
table[x = x, y = y] {\petwothousand};
\addplot[ybar, fill=orange] 
  plot [error bars/.cd, y dir=both, y explicit]
  table [x= x, y = z] {\petwothousand};
\end{axis} 
\end{tikzpicture}
\begin{tikzpicture}[scale = .8]
\begin{axis} [ybar, bar width = 8pt,
scaled y ticks=false,
        ylabel = $\DKL_n$,
         ytick={.01,.02,.03, .04, .05},
     yticklabels={$.01$,$.02$,$.03$, $.04$, $.05$},   
    ymin=0,
    symbolic x coords={ RAND, HEART, MEX, NYC,  SP500, GE, NORM RW},
    xticklabel style={anchor=east, rotate= 90},
 width = 11 cm, 
 height = 6 cm, 
    xtick=data
]
\addplot[ybar, fill=blue]
table[x = x, y = z] {\kldivtwothousand};
\addplot[ybar, fill=gray!50] 
  plot [error bars/.cd, y dir=both, y explicit]
  table [x= x, y =y,  y error plus=yerror, y error minus= yerror] {\kldivtwothousand};
\end{axis} 
\end{tikzpicture}
\caption{ On the left, we compute $\PE_n$ for the time-series for $n = 4$ (in blue) and $n = 5$ (in orange). On the right, we compute $\DKL_n$ for $n = 4$ and the data of length $N = 2000$ (blue).  We generate $400$ random walks $\widehat X$ of length $N = 2000$ and compute $\DKL_n(\widehat X)$ for each. The mean and errors are plotted in gray. }
\label{fig:PEDKL2000}
\end{figure}
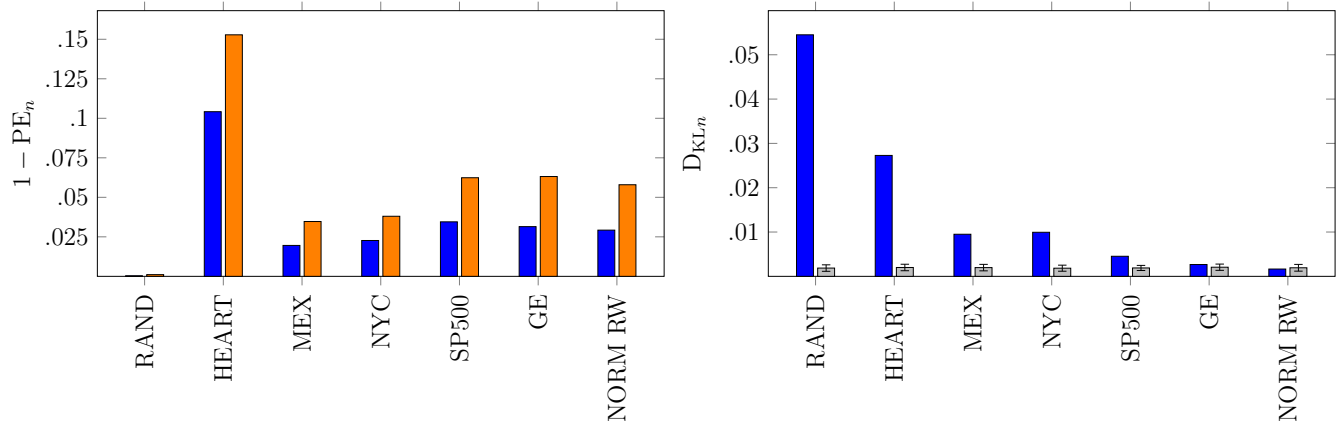

As shown in Figure \ref{fig:PEDKL2000} (a), the changes in heart rate are more correlated than steps in the other time series investigated here. However, when considering the KL divergence method, randomness is the time series furthest from a random walk.  This supports our view that the KL divergence method is frequently a better measure of deviation from a random walk than the permutation entropy of steps. The weather data sets are an interesting example where the structure is periodic and hence neither uniformly random or a fixed random walk. Thus, we see moderate performance under both measures. However, notice that $\PE_n$ only slightly distinguishes temperature data and a simulated random walk but the $D_{KLn}$ measure clearly separates them. 
 
 To add context to the value of the KL divergence, we simulated 400 random walks, $\widehat{X}$, associated with $X$ of length $N$ and calculated $\DKL_n(\widehat X)$ for each.   Using these simulations, we calculate the mean and standard deviation of the KL divergence of the simulated random walks against the model. These are plotted in Figure 7(b). Notice that the stock data is much better approximated by the random walk of its steps than any of the other time series. 
  
Finally, in order to determine how the length of the time series affects $\DKL$, we simulate a uniform random walk with $\mu =0$, $\widehat X$, of length $N$ and compare it to the distribution of patterns in the random walk.  The results mirror those of Figure 6. For $\widehat X$ of length $N = 1000$, $\DKL_4(\widehat X) \approx 0.10$ and $\DKL_5(\widehat X) \approx 0.11$.  For $\widehat X$ of length $N = 5000$, $\DKL_4(\widehat X) \approx 0.07$, where it remains for larger $N$, and $\DKL_5(\widehat X) \approx 0.01$, and falling to $0.007$ when $N=10,000$. This is expected behavior as the value goes to zero in the limit in expectation.

  Expanding on our remarks from the previous section, permutation entropy has frequently been used to study financial time series.  For instance, permutation entropy and the number of forbidden patterns for both closing values and returns were suggested as methods for distinguishing developed and emerging markets with the aim of using these measures to quantify stock market inefficiency \cite{zunino_forbidden_2009}. In this analysis, permutation entropy of returns were correlated with either being a developed or an emerging market, with emerging markets having smaller permutation entropy (i.e. more correlation). We plot these values for our data sets below and perform a more direct comparison in the following section.

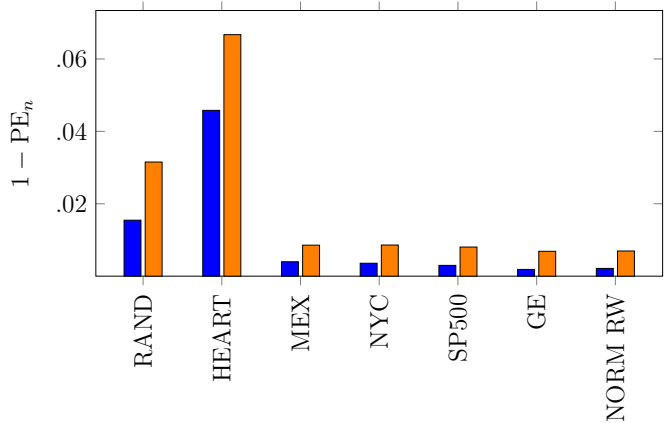
\begin{figure}[!h]
\begin{tikzpicture}[scale = .8]
\begin{axis} [ybar, bar width = 8pt,
    ymin=0,
    ylabel = $1-\PE_n$, 
    scaled y ticks=false,
     ytick={.02, .04, .06},
     yticklabels={$.02$, $.04$, $.06$},   
    symbolic x coords={RAND, HEART,  MEX, NYC,  SP500, GE, NORM RW},
    xticklabel style={anchor=east, rotate= 90},
 width = 11 cm, 
 height = 6 cm, 
    xtick=data
]
\addplot[ybar, fill=blue]
table[x = x, y = z] {\peofstepstwothousand};
\addplot[ybar, fill=orange] 
  plot [error bars/.cd, y dir=both, y explicit]
  table [x= x, y = zz] {\peofstepstwothousand};
\end{axis} 
\end{tikzpicture}
\caption{On the left, we compute $\PE_n$ for the time-series of steps $X_{n}-X_{n-1}$ for $n = 4$ (in blue) and $n = 5$ (in orange). This can be used as a measure of step independence and was presented in \cite{lim_rapid_2014} as a measure of volatility in developing economic markets.  }
\end{figure}

A careful analysis of this method demonstrates some key features that lead us to prefer the explicit random walk model. First, we note that this measure assigns very low values to all of the stock data. As we will see in the next section, this property limits the amount of information that can be extracted. Secondly, we note that the measure does not clearly distinguish the periodic weather data from the random walks.  Finally, the the permutation entropy of the steps discovers a relatively high value for i.i.d. randomly drawn data points because the difference between the random variables is not independent.

\section{Inefficiency in Financial Markets}\label{sec:KLstocks}

In this final section, we analyze the stock market data more closely, using the KL method and measure of momentum introduced above. Economic heuristics suggests that the most appropriate model of the stock market is that of the random walk, see for example \cite{fama_behavior_1965}.  Moreover, since a market whose prices are modeled by a random walk is considered efficient, the divergence of a market from that of a random walk serves as a measure of inefficiency \cite{fama_behavior_1965}.  Developing meaningful measures of market inefficiency is an important and well--studied question in finance.  Applying our method from Section \ref{sec:KLsetup} to a variety of stocks, we posit that a measure of inefficiency using the KL divergence from a random walk null model is preferable to the permutation entropy of returns.

First, using the measures
$$\varepsilon_{12\ldots(n{-}1)n} := p_{12\ldots (n{-}1)n} - (p_{12})^{n-1} \text{ and } \varepsilon_{n(n{-}1) \ldots 21} := p_{n(n{-}1)\ldots21} - (p_{21})^{n-1}$$
that we developed in Section \ref{sec:MM}, we capture the momentum phenomena observed in financial markets. Indeed, $\varepsilon_{12\ldots(n{-}1)n} > 0$ suggests a presence of upward momentum and  $\varepsilon_{n(n{-}1)\ldots21} > 0$ suggests a presence of downward momentum.  As depicted in Figure \ref{fig:momentum}, for each of the stocks considered, the values of $\varepsilon_{\pi}$ for $\pi = 1234$ and $\pi = 4321$ are positive, suggesting a presence of both upward and downward momentum in these markets.  
Both of these results accord with economic data reported by the NBER \cite{jegadeesh_returns_1993, jegadeesh_profitability_1999}  

\begin{figure}
\begin{tikzpicture}

\begin{axis} 
[
	ybar, bar width = 8pt,
    ymin=-0.005,
    scaled y ticks=false,
         ytick={0, .01, .02},
     yticklabels={$0$, $.01$, $.02$},   
    symbolic x coords={SP500, GE, AAPL, BAC, KO, UPS, AMZN},
    xticklabel style={anchor=east, rotate= 90},
 	width = 11 cm,  	  
 	height = 6 cm, 
 	ylabel = $\varepsilon_{\pi}$,
    xtick=data
]
		\addplot[ fill = blue] table[x=x,y=elfour] {\momentum};
					\addplot[fill = red] table[x=x,y=ehfour] {\momentum};			
						
\end{axis} 
\draw (0,.78) -- (9.401,.78);
\end{tikzpicture}
\caption{Values of $\varepsilon_{\pi}$ for $\pi = 1234$ in blue, and $\pi = 4321$ in red. Larger values of $\varepsilon_{\pi}$ correspond to markets containing longer increasing (resp. decreasing) runs than predicted by the associated random walk model. }  
\label{fig:momentum}
\end{figure}
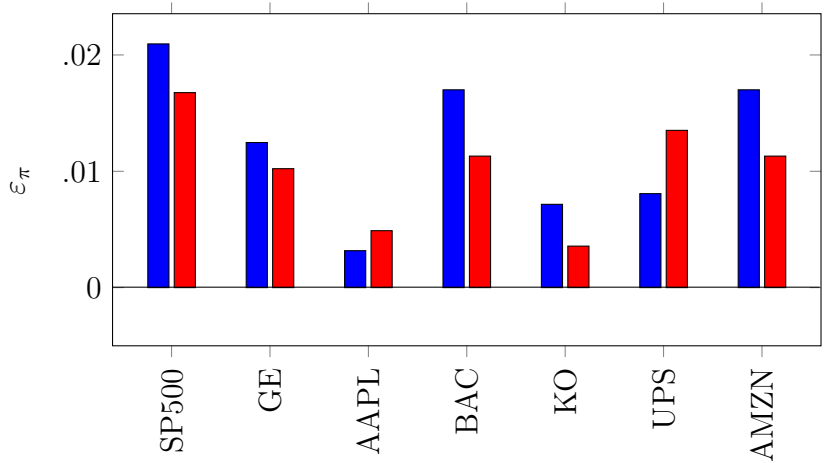

Although the previous result suggests that a random walk may not capture all of the information about the stock behavior since the momentum is a measure of correlation of the steps, which we have assumed to be i.i.d., we conclude with two examples demonstrating the advantages of the random walk divergence over permutation entropy. For each of the stocks under consideration, we form 400 random walks, $\widehat X$, associated to $X$ of length $N =3777$ (the length of $X$).  Then, to determine the significance of $\DKL_4(X)$, we compute $\DKL_4(\widehat X)$ for each.

The results of this experiment are presented in Figure 9.
As depicted in Figure \ref{fig:PEDKLstocks}, Apple stock (AAPL) was furthest from a random walk, perhaps a result of calendar year phenomena associated the release of new products.  On the other hand, large industrial stocks such as General Electric, Coke, and United Parcel Service (resp. GE, KO, and UPS) adhere more closely to the random walk model and are considered more efficient markets in this analysis.

\begin{figure}[!h]
\begin{tikzpicture}[scale = .8]
\begin{axis} [ybar, bar width = 8pt,
    ymin=0,
     ylabel = $1-\PE_n$,
    symbolic x coords={SP500, AAPL, AMZN, BAC, GE, KO, UPS},
    xticklabel style={anchor=east, rotate= 90},
 width = 11 cm, 
 height = 6 cm, 
    xtick=data
]
\addplot[ybar, fill=blue]
table[x = x, y = z] {\peofstepsstocks};
\addplot[ybar, fill=orange] 
  plot [error bars/.cd, y dir=both, y explicit]
  table [x= x, y = zz] {\peofstepsstocks};
\end{axis} 
\end{tikzpicture}
\begin{tikzpicture}[scale = .8]
\begin{axis} [ybar, bar width = 8pt,
    ymin=0,
     ylabel = $\DKL_n$,
    symbolic x coords={SP500, AAPL, AMZN, BAC, GE, KO, UPS},
    xticklabel style={anchor=east, rotate= 90},
 width = 11 cm, 
 height = 6 cm, 
    xtick=data
]
\addplot[ybar, fill=blue]
table[x = x, y = z] {\kldivstocks};
\addplot[ybar, fill=gray!50] 
  plot [error bars/.cd, y dir=both, y explicit]
  table [x= x, y =y,  y error plus=yerror, y error minus= yerror] {\kldivstocks};
\end{axis} 
\end{tikzpicture}

\caption{On the left, we compute $\PE_n$ for the time series of steps for $n = 4$ (in blue) and $n = 5$ (in orange).  On the right, we compute $\DKL_n$ for $n = 4$ and the data of length $N = 2000$ (blue).  We generate $400$ random walks $\widehat X$ (associated with $X$) of length $N = 2000$ and compute $\DKL_n(\widehat X)$ for each. The mean and errors are plotted in gray. }
\label{fig:PEDKLstocks}
\end{figure}
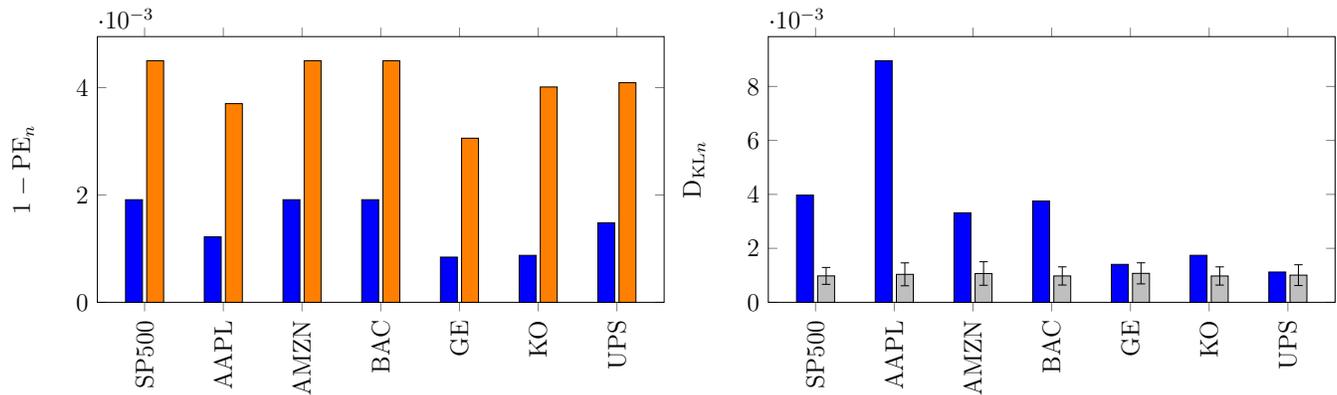

As a final application of these methods, we use historical S\&P500 closing prices from January 1958 until January 2017 and plot our measure of inefficiency, $\DKL_4$, over time, comparing to the permutation entropy of the steps.  For each year from 1960 until 2014, for the five year range surrounding the year (i.e. from January 1 of two years prior to December 31 of two years after, $N \approx 1258$), we compute $\DKL_4$ for the S\&P500, see Figure \ref{fig:KLLongSP}.

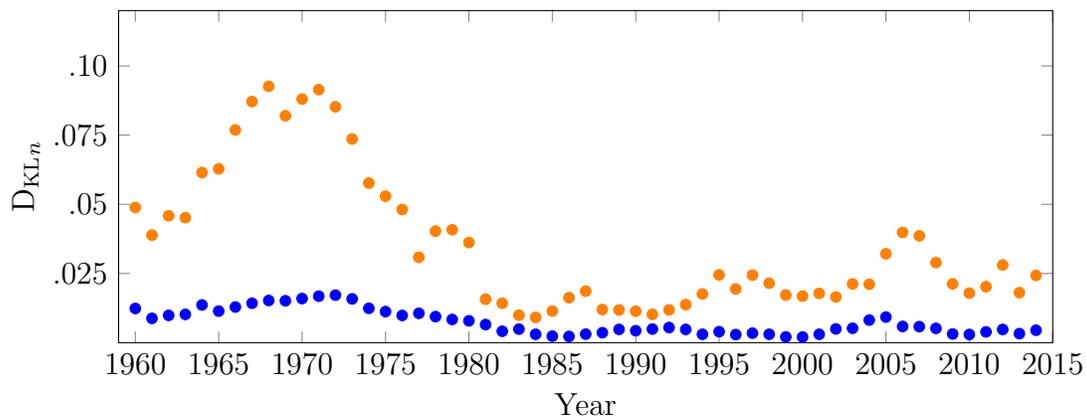
\begin{figure}
\begin{tikzpicture}[scale = 1]
\begin{axis}[
 width = 14 cm, 
 height = 6 cm, 
xlabel = Year,
   scaled y ticks=false,
         ytick={.025,.05,.075,.1},
     yticklabels={$.025$, $.05$, $.075$, $.10$},   
x tick label style={/pgf/number format/.cd,fixed,precision=3, set thousands separator={}},
xmax   = 2015,
xmin   = 1959,
ylabel = $\DKL_n$, 
ymax   = .12,
ymin   = 0
]
\addplot[color = orange,
fill  = orange,
mark  = *,
only marks]
table[x = x, y = KL] {\longrunSP};  


\addplot[color = blue,
fill  = blue,
mark  = *,
only marks]
table[x = x, y = PEofsteps] {\longrunSP};
\end{axis}
\end{tikzpicture}
\caption{Computation of $\DKL_4$ (orange) and the permutation entropy of the steps (blue) on historical S\&P500 daily closing prices during each 5 year window surrounding the year on the $x$-axis.  Both of these metrics  can be treated as a proxy for inefficiency but the $\DKL_4$ provides significantly more information. }
\label{fig:KLLongSP}
\end{figure}

 The general trends depicted in the plot of $\DKL_4$ resonate with the evolution of technology and economic events of that time, while the permutation entropy of the steps is less informative. In particular, we can see the decline in inefficiency as a result of computerized trading, as well as the stock market crash of 1989, the 2000 technology bubble, and the 2008 financial crisis causing an increase in variability and distance from the model. The results presented here are similar to those in \cite{hou_characterizing_2017} for the Shanghai and  Shenzhen Stock Exchanges. 

\section{Conclusion}

In order to account for observed behavior of the distribution of ordinal patterns in time series from economics and other fields, we have introduced a measure of complexity based on random walk null models. Since much of the structure of the ordinal patterns appearing in these financial time series is explained by the underlying process of a random walk,  this measure is better suited for such time series than previous methods based on permutation entropy. 
We provided theoretical and numerical results on the distribution of patterns in the context of random walk models and provided a set of tools for analyzing the complexity of data modeled by time series. Additionally, we have applied our methods to examples from several different domains  in order to validate their usefulness. Not all time series data plausibly arises from random walk processes but for those that do the methods presented in this paper provide a principled method for studying their complexity and inefficiency.

\section*{Acknowledgements} The authors would like to express their gratitude to Peter Doyle, Sergi Elizalde, Nishant Malik, Megan Martinez, Scott Pauls and  Dan Rockmore  for their helpful comments and guidance. 

\bibliographystyle{plain}
\bibliography{Time_Series}

\newpage

\appendix
\section{Data Plots}

In this appendix we display plots of the data sets described in Section 1.5 and used throughout the paper. Plots generated with Mathematica \cite{Mathematica}.

\begin{figure}[!h]
\subfloat[RAND]{\includegraphics[scale=.33]{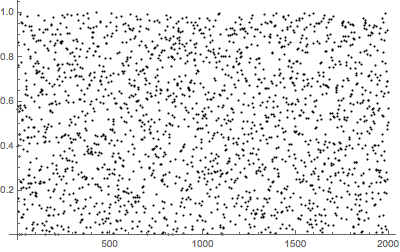}}\quad
\subfloat[NORM RW]{\includegraphics[scale=.33]{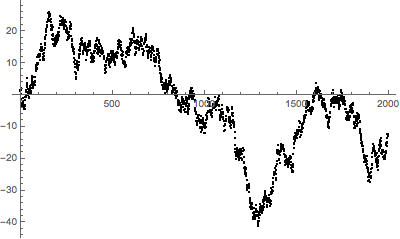}}\quad \quad
\subfloat[UNIF RW]{\includegraphics[scale=.33]{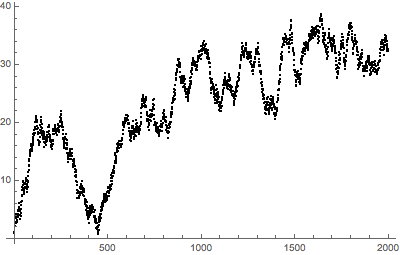}}\\ 
\subfloat[DRIFT RW]{\includegraphics[scale=.33]{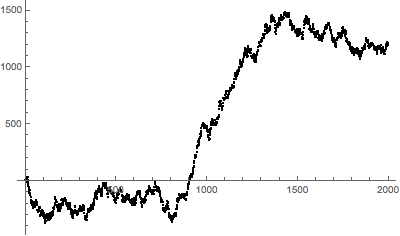}}\quad \quad
\subfloat[MEX]{\includegraphics[scale=.33]{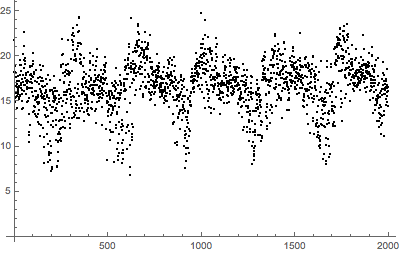}}\quad \quad 
\subfloat[NYC]{\includegraphics[scale=.33]{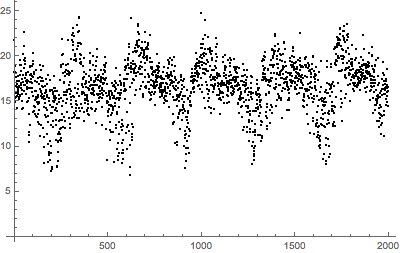}} \\
\subfloat[GE]{\includegraphics[scale=.33]{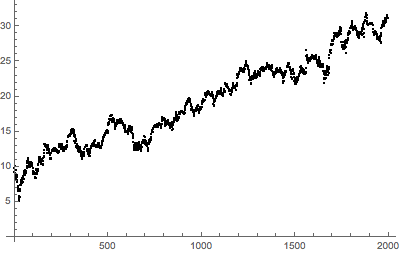}} \quad \quad
\subfloat[HEART]{\includegraphics[scale=.33]{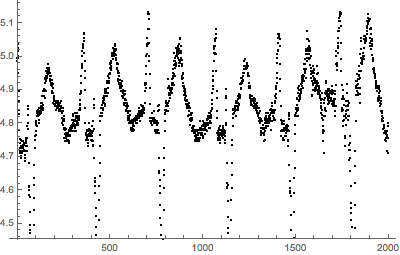}} \quad \quad
\subfloat[STOCK SP500]{\includegraphics[scale=.33]{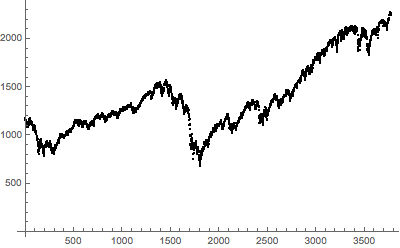}}\\ 
\subfloat[STOCK AAPL]{\includegraphics[scale=.33]{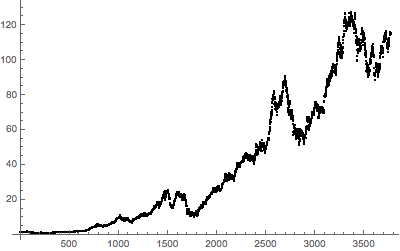}} \quad \quad 
\subfloat[STOCK AMZN]{\includegraphics[scale=.33]{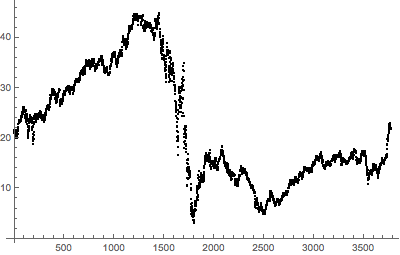}} \quad \quad
\subfloat[STOCK BAC]{\includegraphics[scale=.33]{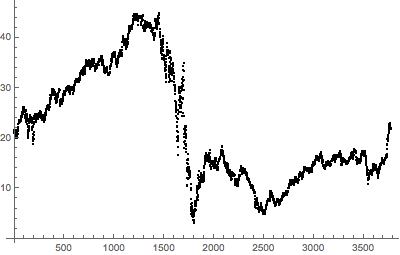}}\\ 
\subfloat[STOCK GE]{\includegraphics[scale=.33]{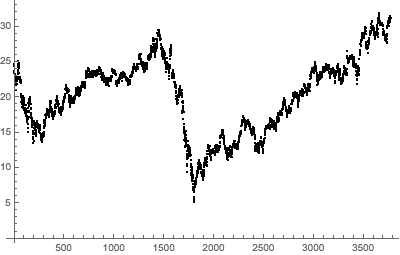}} \quad \quad
\subfloat[STOCK KO]{\includegraphics[scale=.33]{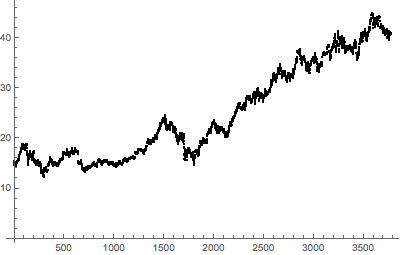}} \quad \quad
\subfloat[STOCK UPS]{\includegraphics[scale=.33]{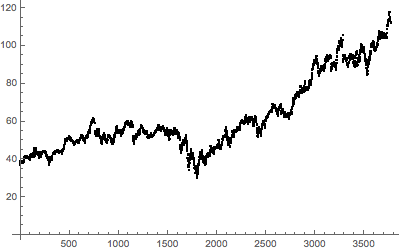}}  
\caption{Graphs of the time-series used throughout this paper, see Section 1.5. Time series (A)-(H) are of length $N = 2000$.  Stock data (I)-(O), used in Section 5,  are closing prices for trading days from January 1, 2002 to January 1, 2017 and of length $N = 3777$.  }
\end{figure}

\newpage

\section{Null Model Distributions}
Here we give the expected distributions of ordinal patterns for the uniform and normal random walk models as determined by Proposition 5. Recall that for the normal distribution the values do not depend on the variance when the mean is zero. For the uniform distribution the $\mu=0$ case is equivalent to setting $b=\frac12$. 

\begin{table}[h!]
\begin{tabular}{|c| c | c | c| }
\hline 
Pattern & Normal: $\mu = 0$ & Uniform: $\mu = 0$ & Uniform: $\P(Y > 0) = b$  \\
\hline
\{123\} & $1/4$ & $1/4$ & $b^2$ \\
\hline 
\{132, 213\} & $1/8$ & $1/8$  & $(1/2)(1-b)^2$\\
\hline  
\{213, 132\} & $1/8$ & $1/8$ & $(1/2)(b^2 + 2b - 1)$ \\
\hline  
\{321\} & $1/4$ & $1/4$  & $(1-b)^2$ \\
\hline
\hline 
\{1234\} & 0.1250 & $1/8$ & $b^3$ \\
\hline 
\{1243, 2134\} & 0.0625 & 1/16 & $(1/2)b (1-b)  (3b-1)$\\
\hline 
\{1324\} & 0.0417 & 1/24 & $(1/3) (1-b) (7 b^2 -5 b+1)$ \\
\hline
\{1342, 3124\} & 0.0208& 1/24 & $(1/6) (1-b)^2 (4 b-1)$ \\
\hline 
\{1423, 2314\} & 0.0355 & 1/48  & $(1/6) (1-b)^2 (5 b-2)$ \\
\hline
\{1432, 2143, 3214\} & 0.0270 & 1/48 & $\begin{cases}  (1/6)(2-24 b+48 b^2-15 b^3)&	\text{if }  b \leq 2/3 \\
(b-1)^2 (2b-1) & 	\text{if } b > 2/3 \end{cases} $\\
\hline
\{2341, 3412, 4123\} & 0.0270 & 1/48 & $(1/6) (1-b)^3$ \\
\hline
\{2413\} &0.0146  &1/48  & $(1/6) (1-b)^3$ \\
\hline
\{2431, 4213\} & 0.0208 & 1/24 &$\begin{cases} (1/6) (24 b^3 -45 b^2 +27 b-5)	& \text{ if } b \leq 2/3 \\ 
(1/2) (1-b)^3	&  \text{ if } b>2/3 
 \end{cases} $\\
\hline
\{3142\} & 0.0146 & 1/48 & $\begin{cases} (1/6)(25 b^3-48 b^2 +30b -6)	& \text{ if } b \leq 2/3 \\  (1/3) (1-b)^3  & \text{ if } b > 2/3
\end{cases}$\\
\hline
\{3241, 4132\} & 0.0355 & 1/48 & $(1/6) (1-b)^3$ \\
\hline
\{3421, 4312\} &  0.0625 & 1/16 &$(1/2) (1-b)^3$\\
\hline
\{4231\} &  0.0417 & 1/24 & $(1/3) (1-b)^3$\\
\hline
\{4321\}  & 0.1250 & 1/8  & $(1-b)^3$ \\
\hline
\end{tabular}
\vspace{1em}

\caption{The values of $\P_Z(\pi)$ for the normal distribution with $\mu = 0$ and in the uniform case for $\P(Y>0) = b$, where $\frac{1}{2} \leq b \leq 1$.}
\label{tab:normunifprobabilites}
\end{table}
 
\newpage

\section{Permutation Equivalence Classes} In \cite{martinez_frequency_2015,MeganThesis}, the authors defined equivalence relation on permutations by $\pi \sim \tau$ if $\P_Z(\pi) = \P_Z(\tau)$ for any random walk, $Z$.  They show that $\pi \sim \tau$ if the permutations can be related by a sequence of combinatorial moves. For completeness, we list the equivalence classes described by their result. Although the existence of these classes was categorized theoretically this is the first time they have been explicitly computed \cite{martinez_frequency_2015}.  We use the classes to define the function $g_n$ in Section 3.

 For $n = 3$, the classes are
$$\{123\}, \{132, 213\}, \{213, 132\}, \{321\}.$$

For $n = 4$, the classes are
$$\{1234\}, \{1243, 2134\}, \{1324\}, \{1342, 3124\}, \{1423, 2314\}, \{1432, 2143, 3214\}, $$ $$ \{2341, 3412, 4123\}, \{2413\}, \{2431, 4213\}, \{3142\}, \{3241, 4132\}, \{3421, 4312\}, \{4231\}, \{4321\}.$$

For $n = 5$, the classes are 
$$ \{12345\}, \{14325\}, \{21354\}, \{21453\}, \{25314\}, \{41352\}, \{45312\},\{52341\}, \{54321\},$$ $$  \{12543, 32145\}, \{13245, 12435\}, \{13425, 14235\},\\  \{15243, 32415\}, \{15342, 42315\}, \{15432, 43215\},$$ $$ \{21345, 12354\}, \{21435, 13254\}, \{21543, 32154\}, \{23145, 12534\}, \\ \{23415, 15234\}, \{24153, 31524\},$$ $$ \{24315, 15324\}, \{24513, 35124\}, \{24531, 53124\}, \{25134, 23514\}, \{25341, 52314\}, \\  \{25413, 35214\}, $$ $$\{25431, 53214\}, \{31245, 12453\}, \{31425, 14253\}, \{31542, 42153\}, \{32514, 25143\}, \{32541, 52143\}, $$ $$\\ \{35142, 42513\}, \{35241, 52413\}, \{41235, 13452\}, \\ \{41253, 31452\}, \{41325, 14352\}, \{41523, 34152\},$$ $$ \{41532, 43152\}, \{42135, 13542\}, \{42351, 51342\}, \\ \{45231, 53412\}, \{51324, 24351\}, \{51423, 34251\},$$ $$ \{51432, 43251\}, \\ \{53142, 42531\}, \{53241, 52431\}, \{54123, 34521\}, \{54132, 43521\}, \{54213, 35421 \},$$ $$ \{54231, 53421\}, \\ \{54312, 45321\}, \{31254\},  \{43125, 14532\}, \{34125, 14523\},  \{13524, 24135\},$$ $$ \\ \{35412, 52134, 45213, 23541\},
  \{51243, 32451\} \{43512, 45132\}, \{23451, 45123, 34512, 51234\},$$ $$
  \{21534, 23154,  15423, 34215\}.$$

\end{document}